\tikzset{
  treenode/.style = {align=center, inner sep=0pt, text centered,
    font=\sffamily},
  arn_n/.style = {treenode, circle, red, font=\sffamily\bfseries, draw=black,
     text width=1.5em, very thick},
  arn_r/.style = {treenode, circle, dashed, black, draw=black,  
    text width=1.5em, very thick},
  arn_b/.style={treenode, circle, black, fill, draw=black, 
    text width=1.5em, very thick}, 
  arn_x/.style = {treenode, rectangle, white,
    minimum width=0.5em, minimum height=0.5em},
  level 1/.style={level distance=15mm,sibling distance=65mm},
  level 2/.style={level distance=15mm,sibling distance=32mm},
  level 3/.style={level distance=15mm,sibling distance=16mm},
  level 4/.style={level distance=15mm,sibling distance=8mm}
}
\def\floor#1{\left\lfloor #1 \right\rfloor}
\def\ceil#1{\left\lceil #1 \right\rceil}
\def\etal{\emph{et~al.}}
\newenvironment{itemize*}%
  {\begin{itemize}%
    \setlength{\itemsep}{0pt}%
    \setlength{\parskip}{0pt}%
    \setlength{\parsep}{0pt}%
    \setlength{\topsep}{0pt}%
    \setlength{\partopsep}{0pt}%
  }%
  {\end{itemize}}%
\newcommand{\eps}{\varepsilon}
\newcommand{\cC}{{\cal C}}
\newcommand{\Ts}{T^{\textsc{s}}}
\newcommand{\Tm}{T^{\textsc{m}}}
\newcommand{\consolid}{\textsc{Consolidate}}
\newcommand{\ow}{\overline{w}}
\newcommand{\oW}{\overline{W}}
\newtheorem{theorem}{Theorem}
\newtheorem{lemma}{Lemma}
\newtheorem{fact}{Fact}
\newenvironment{proof}{\trivlist\item[]\emph{Proof}:}%
{\unskip\nobreak\hskip 1em plus 1fil\nobreak$\Box$
\parfillskip=0pt%
\endtrivlist}
\begin{document}
\title{Dynamic  Trees with Almost-Optimal Access Cost}
\author{Mordecai Golin\thanks{Hong Kong University of Science and Technology. Email {\tt golin@cse.ust.hk}}
\and
John Iacono\thanks{Universit\'{e} libre de Bruxelles and New York University. Email {\tt johniacono@gmail.com}. Supported by NSF grants CCF-1319648, CCF-1533564, a Fulbright Fellowship, and by the Fonds de la Recherche Scientifique-FNRS under Grant no MISU F 6001 1.}
\and 
Stefan Langerman\thanks{Universit\'{e} libre de Bruxelles. Email {\tt sl@slef.org}. Directeur de recherches du Fonds de la Recherche Scientifique-FNRS.}
\and
J. Ian Munro\thanks{Cheriton School of Computer Science, University of Waterloo. Email {\tt imunro@uwaterloo.ca}}
\and
Yakov Nekrich\thanks{Cheriton School of Computer Science, University of Waterloo. Email {\tt yakov.nekrich@googlemail.com}}
}
\maketitle

\begin{abstract}
An optimal binary search tree for an access sequence on elements is a static tree that minimizes the total search cost.  Constructing perfectly optimal binary search trees is expensive so the most efficient algorithms construct  {\em almost optimal} search trees. There exists a long literature of constructing almost optimal search trees {\em dynamically}, i.e., when the access pattern is not known in advance.  All of these trees, e.g., splay trees and  treaps, provide a {\em multiplicative} approximation to the optimal search cost.

In this paper we show how to maintain an almost optimal  weighted binary search tree under access operations and insertions of new elements where the approximation is an {\em additive} constant. More technically, we maintain a tree in which the depth of the leaf holding an element $e_i$ does not exceed $\min(\log(W/w_i),\log n)+O(1)$ where $w_i$ is the number of times $e_i$ was accessed and $W$ is the total length of the access sequence. 

Our techniques can also be used to encode a sequence of $m$ symbols with a dynamic alphabetic code in $O(m)$ time so that the encoding length is bounded by $m(H+O(1))$, where $H$ is the  entropy of the sequence. This is the first efficient algorithm for adaptive alphabetic coding that runs in constant time per symbol. 
\end{abstract}

\thispagestyle{empty}
\newpage

\section{Introduction}
\label{sec:intro}
The dictionary problem is one of the most fundamental problems in computer science. It requires maintaining 
a set of elements  in a data structure and being  able to efficiently  search for and find them when needed.  
In the comparison model, balanced binary search trees (BSTs) provide an optimal worst case solution for this problem. We consider leaf-oriented binary search trees, where all of the data is located in leaves and internal nodes store keys needed to guide the search to the leaves. For a set of $n$ elements, it is well known that the perfectly balanced search tree has height $\ceil{\log(n+1)}$ and  $\ceil{\log(n+1)}$ comparisons\footnote{Throughout this paper $\log$ denotes the binary logarithm and $\log^{(f)}$ is the $\log$ function iterated $f$ times.} are required to access an element, both in the worst and average cases.  In many practical applications,   some elements are known to be accessed  more frequently than others; unbalancing and restructuring the tree so that more frequently accessed elements are stored higher up,   can lead to better search times.  Let $d_i$ be the depth of the $i^{\mbox{\footnotesize th}}$  element  $e_i$ (stored at a leaf),  $w_i$ the frequency of accessing that element and $W=\sum_i w_i$ the total number of accesses. The total access cost is $\sum_i d_i w_i$; normalizing gives  the \emph{tree cost} which is  $ \frac 1 W \sum_i d_i  {w_i} .$  A tree that minimizes the tree cost minimizes the total access cost and  is an {\em optimal} BST.

There is a long literature on constructing optimal BSTs, both exactly and approximately\footnote{In this paper the term ``optimal'' refers to the optimality of the tree with respect to access frequencies. Splay trees, for example, can utilize other features of the access sequence in addition to frequencies.}.  In the approximate case, there are algorithms that provide both multiplicative and additive errors. 
In the dynamic version of the problem the frequencies $w_i$ are not known in advance but are calculated cumulatively as accesses are made.  The problem then is to update the tree to be optimal for the current observed frequencies.
Surprisingly,  while there are many results on dynamic  approximately optimal BSTs with constant multiplicative-error, prior to this paper there was not much known about   constant additive-errors.

In this paper we revisit this problem and describe  how to maintain dynamic  approximately optimal BSTs with constant {\em additive}-error (this will be formally defined in the next subsection).
The cost of re-building the tree after an access operation is bounded by $O(\log^{(f)}n)$ for any constant $f$, with the additive error growing linearly with $f.$  As in standard BSTs, our technique permits insertions of new elements to the dictionary at any time.

A variant of our approach can also be used to obtain an almost-optimal adaptive alphabetic code with  $O(1)$ encoding cost.

\paragraph*{Previous and Related Work.}
There are a number of data structures that maintain (unweighted) dynamic trees with $O(\log n)$ depth, starting with the classic balanced trees of Adelson-Velski and Landis~\cite{AVL62} and other handbook solutions~\cite{Bayer1972,GuibasS78}. These data structures maintain all leaves at height $O(\log n)$ and thus support both searches and updates, i.e., insertions and deletions,  in $O(\log n)$ time. 
The $k$-neighbor tree of Maurer et al.~\cite{MaurerOS76} achieves tree depth $(1+\delta)\log n$  and update cost $O((1/\delta)\log n)$ for any positive $\delta>0$. Andersson\cite{Andersson89} improved this result and showed how to maintain a tree of height $\log n + O(k)$ in $O(\log n)$ time per update. Even tighter bounds on constant and improved update times were described by Andersson and Lai~\cite{AnderssonL91} and Fagerberg~\cite{Fagerberg96}. We refer to~\cite{AnderssonFL04} for an extensive survey of results in this area.

Gilbert and Moore  \cite{Gilbert1959} introduced  an $O(n^3)$ time 
algorithm for constructing optimal BSTs.  This was improved in 1971 by Knuth \cite{Knuth1971} to $O(n^2)$, which is still the best known method for solving the general case of the problem.  Those two algorithms assume that frequencies for both successful (elements in the tree) and {\em unsuccessful} (not in the tree) searches are given in advance and optimize accordingly.   If the problem is restricted to successful searches   then optimal BSTs can be constructed in  in $O(n \log n)$ time  using the Hu-Tucker algorithm and its variants.
\cite{Hu1971,Garsia1977}. Klawe and Mumey \cite{Klawe1995} show that, under some general conditions as to how the algorithms can operate,  $\Omega(n \log n)$ is the best possible construction time, although, for certain restricted types of input, $O(n)$ can be achieved \cite{Klawe1995,Hu2005}.

Let $p_i = w_i/W$ be the  empirical probability of element $i$ in the access sequence.  The {Shannon Entropy} of the sequence is $H =  \sum_i p_i \log (1/p_i)$ which   is known to be  a lower bound on the cost of  tree in which all data is in the leaves\footnote{When data can also be kept in internal nodes, as when three-way comparisons are allowed, the lower bound decreases to $H - \log H$~\cite{Allen82}.}.
If a tree was guaranteed to have $d_i \le  c +\log (1/p_i)$ for all $i$ then the total cost of all accesses would be  at most $\sum_i w_i (c+ \log (1/p_i) )= WH + cW$, i.e., 
within  a  constant additive error per access.
 In the static case multiple authors \cite{Ahlswede,Yeung1991,mehlhorn1977best} have provided $O(n)$ time algorithms for constructing such trees with $c=2$.
 
Now consider the  dynamic case, in which trees are rebuilt based on cumulative frequencies viewed so far.  {\em Splay  trees}  \cite{sleator1985self} and {\em Treaps,}  \cite{seidel1996randomized} maintain {\em static optimality},  essentially keeping  element $e_i$ at depth $d_i=O(\log (1/p_i))$ for the current cumulative frequencies, in the amortized sense.  This guarantees constant  {\em multiplicative} errors in the dynamic case.
There was no comparable result for maintaining almost optimal trees with additive errors, i.e., keeping element $e_i$ at depth $d_i=\log (1/p_i) +O(1).$   The best technique would be to rebuild the tree from scratch at every step.

The dynamic (or adaptive) alphabetic coding problem is closely related to the dynamic alphabetic tree problem just described. The coding problem is to produce an encoding for a sequence of symbols $S[1]\ldots S[m]$ over an ordered alphabet $\{\,a_1,\ldots, a_n\,\}$ so that (1) no codeword is a prefix of any other and (2) the codeword for $a_i$ is lexicographically smaller than the codeword for $a_j$ iff $a_i<a_j$. In the adaptive scenario the  input sequence is not known in advance; hence, we need to update the code every time a symbol is encoded. Dynamic Huffman \cite{Knuth85,Vit87}
and dynamic Shannon \cite{Gagie04} algorithms solve this problem for the non-alphabetic case.  The algorithm of Gagie~\cite{Gagie04} maintains a dynamic alphabetic code, such that the total encoding length is bounded by $(H+2)m$ and runs in $O(m(H+1))$ time.  

Alphabetic coding is related but not equivalent to the alphabetic trees problem.  Any alphabetic tree can be transformed into an alphabetic code in a straightforward way. Hence any dynamic alphabetic tree structure provides us with an alphabetic coding method.  But this imposes a lower bound on the encoding time: if the code is represented by a tree, then we have to encode the symbols bit-by-bit. Hence any  tree-based alphabetic coding method requires $\Omega(mH)$ time to encode the sequence. On the other hand, not every adaptive coding method can be transformed into a method for maintaining an alphabetic tree.  For example, the method of Gagie~\cite{Gagie04} does not store the alphabetic tree and therefore can not be used to implement a dynamic dictionary.

\paragraph*{Notation.} The \emph{weight} $w_{\ell}$ of a leaf node $\ell$ is the total number of times that an element stored in $\ell$ was accessed.  We assume that every item is accessed at least once so $w_\ell \ge 1$  The weight of an internal node $u$ is the total weight of all leaves in the subtree of $u$; the weight of a subtree is equal to the weight of its root.   The total weight $W$ of a tree $T$ is  the weight of its root node, i.e., $W= \sum_{\ell} w_{\ell}$ where the sum is taken over all leaves $\ell$.  This is also the total number of accesses made.

When necessary we further denote  by $w_{\ell}^{(j)}$  the number of accesses to $\ell$ during the first $j$ accesses.  Thus  $W^{(t)}= \sum_{\ell} w^{(t)}_{\ell}=t.$

\paragraph*{Relation Between Static and Dynamic Optimal Trees.} 

Consider an optimal static binary search tree for a sequence of $W$ accesses to $n$ elements. As previously noted, the  average cost of such a tree is at most $H+2$ where $H$ is the entropy of the access sequence.

\begin{lemma}
\label{lem:Dynamic Entropy}
Let $a_1,a_2,\ldots, a_W$ with $a_i \in \{1,2,\ldots,n\}$  be  a length $W$ {\em access sequence} on the elements, i.e., element $e_{a_t}$ is accessed at time $t$.
 Let $H$ be the entropy corresponding to the full access sequence.
Then 
$$
\sum_{t=1}^W  \log \frac t {\max\left(w^{(t-1)}_{a_t},\, 1\right)}  \le W\cdot H + 2W.
$$
\end{lemma}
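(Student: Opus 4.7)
The plan is to rewrite the left-hand side in closed form as a multinomial coefficient and then apply the classical entropy bound for multinomial coefficients.

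First, I would split the summand via $\log(x/y) = \log x - \log y$. The first piece is immediate: $\sum_{t=1}^{W} \log t = \log(W!)$. For the second piece $\sum_{t=1}^{W} \log \max(w^{(t-1)}_{a_t}, 1)$, I group the terms by element. If element $i$ is accessed $w_i$ times overall, then at its $k$th access (for $k = 1, \ldots, w_i$) the value $w^{(t-1)}_{a_t}$ equals exactly $k-1$, so the sequence of counts seen at occurrences of $i$ is $0, 1, 2, \ldots, w_i - 1$. The $\max$ with $1$ replaces only the initial $0$ by $1$, so the contribution of element $i$ to the sum is $\log 1 + \log 1 + \log 2 + \cdots + \log(w_i-1) = \log((w_i-1)!)$ (with $0! = 1$). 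Using the identity $w_i! = w_i \cdot (w_i-1)!$ and collecting everything yields
$$ \sum_{t=1}^{W} \log \frac{t}{\max\!\left(w^{(t-1)}_{a_t},\,1\right)} \;=\; \log W! - \sum_i \log((w_i-1)!) \;=\; \log\binom{W}{w_1, \ldots, w_n} + \sum_i \log w_i. $$

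Next I would bound the two remaining terms. For the multinomial I would invoke the classical inequality $\binom{W}{w_1,\ldots,w_n} \le 2^{WH}$, which follows from the observation that under the i.i.d.\ distribution with probabilities $p_i = w_i/W$, every sequence of type $(w_1,\ldots,w_n)$ has probability $\prod_i p_i^{w_i} = 2^{-WH}$, and the total probability of all such sequences is at most $1$. This gives $\log \binom{W}{w_1,\ldots,w_n} \le WH$. For the residual $\sum_i \log w_i$, since the assumption $w_i \ge 1$ combined with $\log x \le x$ for $x \ge 1$ yields $\sum_i \log w_i \le \sum_i w_i = W$. Combining, the entire sum is at most $WH + W$, which is comfortably within the claimed bound $WH + 2W$.

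There is no substantial obstacle here: the only step requiring care is the bookkeeping of the $\max$ in the initial rewrite, which ``activates'' exactly once per element (at its first access) and is precisely what converts $\log(w_i!)$ into $\log((w_i-1)!)$. After that, the bound on the multinomial coefficient and the elementary inequality $\log x \le x$ are both classical.
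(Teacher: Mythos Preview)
Your proof is correct and in fact yields the sharper bound $WH + W$. Both you and the paper begin with the same split $\sum_t \log t - \sum_t \log\max\bigl(w^{(t-1)}_{a_t},1\bigr)$, but diverge afterward. The paper bounds the two pieces separately via crude Stirling-type estimates, $\log(W!) \le W\log W$ and $\log(w_j!) \ge w_j\log w_j - 2w_j$, and then subtracts. You instead recognize the difference as $\log\binom{W}{w_1,\ldots,w_n} + \sum_i \log w_i$ and invoke the entropy bound $\binom{W}{w_1,\ldots,w_n} \le 2^{WH}$, which is exact in the leading term; all the slack then comes from the elementary $\sum_i \log w_i \le W$. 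Your bookkeeping of the $\max$ is also more careful than the paper's: the paper asserts that for element $i$ the denominators run over $1,2,\ldots,w_i$, whereas in fact they run over $1,1,2,\ldots,w_i-1$, giving $\log((w_i-1)!)$ rather than $\log(w_i!)$; your derivation handles this correctly and is what produces the residual $\sum_i \log w_i$ term.
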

The proof of this Lemma is straightforward and is therefore deferred to the Appendix.

Suppose that we could build a tree $T^{(t)}$ such that the depth of $e_i$  {\em after} access $t$ is $d_i^{(t)} \le \log \frac t {w_i^{(t)}} +c$.
The access of $a_t$ at time $t$ would be in the previous tree $T^{(t-1)}$ with cost $d^{(t-1)}_{a_t}.$ The only exception to the above is if  time $t$ is the first access to $e_{a_t}$,  so it was not already in $T^{(t-1)}$. In that  case the access cost would  be  $d^{(t)}_{a_t}$, the depth of the location into which $a_i^t$ would be inserted.   Thus define $d^{(t-1)}_{a_t}=d^{(t)}_{a_t}$.  Since $w_{a_t}^{(t-1)}=0$ and $w_{a_t}^{(t)}=1$,
 $d_{a_t}^{(t-1)}= d_{a_t}^{(t)} \le \log  t+c = \log \frac t {\max\left(w^{(t-1)}_{a_t},\, 1\right)} + c.$
The total cost of the accesses would then, from  Lemma~\ref{lem:Dynamic Entropy}, be
$$\sum_i d^{(t-1)}_{a_t} \le W\cdot H + (2 +c) W,$$
i.e, within a constant additive error of optimal per access, where optimal defined as the cost with  the static optimal tree, is lower-bounded by $W\cdot H.$

Our approach to building almost optimal trees is therefore to maintain such trees $T^{(t)}$ over the access sequences.

\paragraph*{Our Results.}
Let $f \geq 1$ be any fixed integer. In this paper we describe a dynamic tree structure that can be maintained under access operations and insertions. The depth of the leaf that holds $e_i$ is bounded by $\min(\log(W/w_i),\log n)+ O(f)$ where $w_i$ is the number of times $e_i$ was accessed so far and $W$ is the total length of the access sequence.  Hence we can access any element $e_i$ using 
at most $\min(\log n,\log(W/w_i))+ O(1)$ comparisons. We can also insert new elements into the tree. When an element is accessed (resp.\ when a new element is inserted),  only $O(\log^{(f)}n)$ worst-case  time will be needed  to update the tree; this update procedure does not require any comparisons. Thus our data structure enjoys the advantages of both the weighted alphabetic  tree and the perfect binary tree. At the same time, the cost of maintaining the data structure is low. 

 This result is obtained by  a combination of two ideas. First, our construction is based on  approximate weights of elements instead of exact weights. Second, we maintain an unweighted binary tree $\Ts$ with leaves ``representing''  approximate weights. Our dynamic tree is  a subtree of $\Ts$. We define the approximate weights in Section~\ref{sec:approxim} and describe the tree $\Ts$ in Section~\ref{sec:static}. Next, we show how updates of our data structure can be implemented by leaf insertions in $\Ts$ in Section~\ref{sec:dynamic}. We reduce the update cost and make all time bounds worst-case in Sections~\ref{sec:updates} and~\ref{sec:worst} respectively. 

Our second result,  concerns the adaptive alphabetic coding problem. Our method enables us to encode the sequence of $m$ symbols with an adaptive alphabetic code in $O(m)$ time, constant time per symbol (in contrast to $O(m(H+1))$ time in \cite{Gagie04}). The length of encoding is bounded by $m(H+1)+ O(m)$ bits. Our  solution is based on the same approach as our dynamic tree structure, but we employ a different method to maintain the underlying tree. This method is based on the list maintenance problem~\cite{Willard92,BenderCDFZ02,BenderFGKM17}.  The full details of this result are omitted from this extended abstract but are presented  in the Appendix,  in Section~\ref{sec:alphabetic}.

\section{Preliminaries}
\label{sec:prelim}
An efficient solution for the unweighted search tree problem was presented by Maurer et al.~\cite{MaurerOS76}. Their data structure, called a {\em $k$-neighbor tree}, is a tree of height $(1+\delta)\log n$, where $\delta$ denotes an arbitrarily small positive constant.  
A  $k$-neighbor tree is a binary tree $T$ such that (1) all leaves in $T$ have the same depth and (2) if a node $u\in T$ has only one child, then $u$ has at least one right neighbor  (on the same level), and (3) if a node $u$ has $l$ right neighbors, then $\min(k,l)$ nearest right neighbors of $u$ have two children. 

Since this will be used later,  we give a sketch of the insertion into such a tree  below.

When a new leaf $x$  is inserted into the tree, we find the node $p$ such that the $x$ must be inserted below $p$  and call a recursive procedure $\textsc{Insert}(p,x)$. First, we make $x$ a new child of $p$. If $p$ has two children, the insertion  procedure is completed. If $p$ has three children, we look for a neighbor node $q$ of $p$ such that the distance between $p$ and $q$ is at most $k$ and $q$ has only one child. If $q$ is found, we call the procedure $\textsc{Move}(p,q)$. If $q$ is not found, we create a new node $p'$ that has  one child; the only child of $p'$ is the leftmost child of  $p$.  If  $p$ is not the root node, then we call the procedure $\textsc{Insert}(\text{parent}(p),p')$; otherwise we create a new root node $r_n$ and make both $p$ and $p'$ the children of $r_n$.
 
The arguments of the procedure $\textsc{Move}(p,q)$ are two neighbor nodes, $p$ and $q$, such that $p$ has three children and  $q$ has only one child. All nodes $u$ between $p$ and $q$ have two children. The procedure is applied to the children of all nodes $u$ between $p$ and $q$; every child node is shifted by one position to the right or to the left. At the end $p$, $q$, and all nodes $u$ have two children.  Thus $\textsc{Move}(p,q)$ consists of $d$ shifts, where $d$ is the distance from $p$ to $q$. Procedure $\textsc{Move}(p,q)$ needs $O(k)$ time because every node shift takes $O(1)$ time.  When a new leaf is inserted, we execute $\textsc{Move}(p,q)$ only one time. Excluding the cost of  $\textsc{Move}(p,q)$, we spend $O(1)$ time on every tree level. Therefore a new leaf can be inserted into a tree in $O(\log n+k)$ time.  
A more detailed description of an insertion can be found in~\cite{Andersson89}. We can delete a leaf using a symmetric procedure. 

The height of a $k$-neighbor tree with  $n$ leaves does not exceed $\floor{\frac{\log n}{\log(2-\frac{1}{k+1})}+1}$. 
Using the fact that for any  $k\ge\log n$ the height of the tree is bounded by $\log n + O(1)$, Andersson~\cite{Andersson89} showed how, by using an appropriate value of $k$ the tree height can be bounded by height $\log n + 2$ using only $O(\log n +k)= O(\log n)$ time per operation. It is this version of the data structure that we will use later.

\section{Approximate Weights}
\label{sec:approxim}
Consider an ordered weighted set of elements $E=\{\,e_1< e_2<\ldots < e_n\,\}$  let $w_i$ denote the  weight of $e_i$ and   $W=\sum_{j=1}^nw_j$.
Define the approximate (or quantized) weight of an element $e_i$ as  $w'_i=\ceil{w_i/\tau}$ for $\tau=\frac{W}{n}$. Thus all approximate weights are integers between $1$ and $n$. 
Note that $\sum\frac{w_i}{\tau}=\frac{n}{W} \sum_i w_i =n$. Hence $W'=\sum \ceil{\frac{w_i}{\tau}}\le \sum_i\frac{w_i}{\tau}+ n=2n\le 2W$.

\begin{lemma}
  \label{lemma:approxim}
Suppose that the depth of a leaf $\ell_i$  in a tree $T'$ does not exceed $\log(W'/w'_i)+c$. Then the depth of $\ell_i$ in $T'$ does not exceed $\min(\log(W/w_i),\log n)+ c+1$.
\end{lemma}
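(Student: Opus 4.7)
The plan is to bound $W'/w'_i$ by both $2W/w_i$ and $2n$, take logarithms, and then apply the hypothesis on the depth of $\ell_i$.

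First I would recall the two simple facts about the quantization already derived just before the lemma: $W' \le 2n$ (stated explicitly in the excerpt), and $W' \le 2W$ (also stated). Next, from the defining formula $w'_i = \lceil w_i/\tau\rceil$ with $\tau = W/n$, I would note the lower bound $w'_i \ge w_i/\tau = w_i n / W$, which follows simply from dropping the ceiling. Of course $w'_i \ge 1$ as well, since the ceiling of a positive real is at least $1$.

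I would then combine these in two ways to get the two halves of the min. Using $W' \le 2n$ and $w'_i \ge w_i n/W$,
\[
\frac{W'}{w'_i} \;\le\; \frac{2n}{w_i n/W} \;=\; \frac{2W}{w_i},
\]
so $\log(W'/w'_i) \le \log(W/w_i) + 1$. Using $W' \le 2n$ and $w'_i \ge 1$,
\[
\frac{W'}{w'_i} \;\le\; 2n,
\]
so $\log(W'/w'_i) \le \log n + 1$. Taking the minimum of the two bounds yields
\[
\log(W'/w'_i) \;\le\; \min\bigl(\log(W/w_i),\,\log n\bigr) + 1.
\]

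Finally I would add $c$ to both sides and invoke the hypothesis that the depth of $\ell_i$ in $T'$ is at most $\log(W'/w'_i) + c$, which gives the claimed bound $\min(\log(W/w_i),\log n) + c + 1$. There is no real obstacle here; the only thing to be careful about is using the ``right'' upper bound on $W'$ in each case ($2n$ works for both, so nothing subtle is needed) and making sure the ceiling is only ever lower-bounded, never upper-bounded, when it sits in the denominator.
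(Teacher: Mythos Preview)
Your proof is correct and follows essentially the same approach as the paper: both use $W'\le 2n$ together with $w'_i\ge 1$ to obtain the $\log n+1$ bound, and both combine $W'\le 2n$ with $w'_i\ge w_i/\tau$ (the paper phrases this as $W'\tau\le 2W$ and $w'_i\tau\ge w_i$) to obtain the $\log(W/w_i)+1$ bound.
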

\begin{proof}
   Since $w'_i\ge 1$ for all $i$, $\log (W'/w'_i)\le \log W'\le \log n +1$. 
Furthermore $W'\cdot \tau\le 2W$ and $w'_i\cdot \tau \ge w_i$. Hence $\frac{W'}{w'_i}=\frac{W'\cdot \tau}{w'_i\cdot \tau}\le \frac{2W}{w_i}$ and $\log\frac{W'}{w'_i}\le \log\frac{W}{w_i}+1$.

 In summary  $\log\frac{W'}{w'_i}\le \min(\log\frac{W}{w_i},\log n)+1$.
\end{proof}
Thus the problem of maintaining an almost-optimal tree $T'$ for quantized weights $\{\,w'_1,\ldots, w'_n\,\}$ is equivalent to the problem of maintaining an almost-optimal tree for exact weights $\{\,w_1,\ldots, w_n\,\}$. The tree $T'$ has another important property: the depths of all leaves in $T'$ are bounded by $\ceil{\log n} + O(1)$. 
\begin{figure}[tbh]
  \centering
  \scalebox{0.55}{
  \begin{tikzpicture}[level/.style={sibling distance = 5cm/#1,
      level distance = 1.5cm}] 
    \node [arn_b] {}
    child{ node [arn_b] {} 
      child{ node [arn_b] {} 
        child{ node [arn_b] {} 
          child{ node [arn_n,label=left:$\eps_1$] {$e_1$}}
          child{ node [arn_r] {$e_1$}} 
        } 
        child{ node [arn_r] {}
          child{ node [arn_r] {$e_2$}}
          child{ node [arn_r] {$e_2$}} 
        }
      }
      child{ node [arn_b] {}
        child{ node [arn_n,label=left:$\eps_2$] {}
          child{ node [arn_r] {$e_2$}}
          child{ node [arn_r] {$e_2$}} 
        }
        child{ node [arn_r] {}
          child{ node [arn_r] {$e_3$}}
          child{ node [arn_r] {$e_3$}} 
        }
      }                            
    }
    child{ node [arn_b] {}
      child{ node [arn_n,label=left:$\eps_3$] {} 
        child{ node [arn_r] {}
          child{ node [arn_r] {$e_3$}}
          child{ node [arn_r] {$e_3$}} 
        }
        child{ node [arn_r] {}
          child{ node [arn_r] {$e_3$}}
          child{ node [arn_r] {$e_3$}} 
        }
      }
      child{ node [arn_b] {}
        child{ node [arn_r] {}
          child{ node [arn_r] {$e_3$}}
          child{ node [arn_r] {$e_3$}} 
        }
        child{ node [arn_b] {}
          child{ node [arn_n,label=above:$\eps_4$] {}}
          child{ node [arn_r] {$e_4$}} 
        }
      }
    }
    ; 
  \end{tikzpicture}}
\hspace{1pc}
  \scalebox{0.55}{
  \begin{tikzpicture}[level/.style={sibling distance = 5cm/#1,
      level distance = 1cm}] 
    \node [arn_b] {}
    child{ node [arn_b] {} 
      child{ node [arn_b] {} 
        child{ node [arn_b] {} 
          child{ node [arn_n] {$\eps_1$}}
        } 
      }
      child{ node [arn_b] {}
        child{ node [arn_n] {$\eps_2$}
        }
      }                            
    }
    child{ node [arn_b] {}
      child{ node [arn_n] {$\eps_3$} 
      }
      child{ node [arn_b] {}
        child{ node [arn_b] {}
          child{ node [arn_n] {$\eps_4$}}
        }
      }
    }
    ; 
  \end{tikzpicture}}

  \caption{Left: Balanced tree of approximate weights $w'_1=1$, $w'_2=2$, $w'_3=4$, and $w'_4=1$. Elements $e_1$, $\ldots$, $e_4$ are stored in nodes $\eps_1$, $\ldots$, $\eps_4$ respectively.  Pseudo-leaves are shown with dashed lines. Internal nodes of $\Ts$ that are not nodes of $T$ are also drawn with dashed lines. Leaves of $T$ are shown with solid lines and internal nodes of $T$ are depicted by filled circles. Right: Almost-optimal tree corresponding to the tree on Fig.~\ref{fig:static}}
  \label{fig:static}
\end{figure}

\section{Warm-Up: Almost-Optimal Static Trees}
\label{sec:static}
In this section we introduce our approach and  basic notions that will be used in the following sections. By way of introduction we describe a method that produces an almost-optimal tree  for a static set of elements with fixed weights.  

We keep weights of elements as entries in an array $B$ of size $m=2W' \le 2n$ so that there are two entries for each unit of weight. The  first $2w'_1$ entries of $B$ are assigned to $e_1$, the following $2w'_2$ entries are assigned to $e_2$, and so on. In general we assign entries $B[l_i]$, $\ldots$, $B[r_i]$ to the element $e_i$ where $l_i=(2\sum_{j=1}^{i-1}w'_i)+1$ and $r_i=2\sum_{j=1}^iw'_i$. Let $\Ts$ denote a conceptual perfectly balanced tree on $B$. The $i$-th leaf of $\Ts$ corresponds to the entry $B[i]$ of $B$, every internal node has two children, and the height of $T$ is $\log m=\log n +1$\footnote{To avoid tedious details, we assume in this section that $m$ and  $n$ are powers of $2$.}. 
The leaves of $\Ts$ will be called \emph{pseudo-leaves}. Leaves corresponding to entries in $B[l_i..r_i]$ will be called  \emph{pseudo-leaves of the element $e_i$} (or pseudo-leaves associated to $e_i$).

\begin{fact}
  \label{fact:ancestor}
Consider a node $u$ of height $h\ge \floor{\log r}$ for some $r\ge 1$. Suppose that $r$ leftmost (rightmost) pseudo-leaves in the subtree of $u$ are pseudo-leaves of $e_i$. Then there is at least one node $v$ of height $\floor{\log r}$ such that all pseudo-leaves in the subtree of $v$ are pseudo-leaves of $e_i$. \\
Consequentially, if $2x$ entries are assigned to some element $e_i$, then there is at least one node $v$ of height $\floor{\log x}$, such that all pseudo-leaves in the subtree of $v$ are assigned to $e_i$. 
\end{fact}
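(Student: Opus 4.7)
The plan is to handle the two parts of the fact separately, first proving the main claim about the leftmost/rightmost $r$ pseudo-leaves, and then deriving the ``consequentially'' statement from it (or, equivalently, from a direct alignment argument).

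For the main statement, set $h' = \lfloor \log r \rfloor$, so that $2^{h'} \le r$. Since $T^{\textsc{s}}$ is a perfectly balanced binary tree and $u$ has height $h \ge h'$, $u$ has a well-defined leftmost descendant $v$ at height exactly $h'$: take the $(h - h')$-fold left child of $u$. The subtree of $v$ has exactly $2^{h'}$ pseudo-leaves, and by construction these are precisely the $2^{h'}$ leftmost pseudo-leaves of the subtree of $u$. Since $2^{h'} \le r$ and the $r$ leftmost pseudo-leaves of $u$'s subtree are all assigned to $e_i$, in particular the leaves of $v$ are all assigned to $e_i$. The ``rightmost'' case is symmetric, taking the rightmost descendant of $u$ at height $h'$.

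For the consequence, suppose $2x$ entries are assigned to $e_i$; by construction these form a consecutive block $B[l_i..r_i]$ of length $2x$. Let $h' = \lfloor \log x \rfloor$, so $2x \ge 2 \cdot 2^{h'} = 2^{h'+1}$. Pseudo-leaves corresponding to a subtree at height $h'$ in $T^{\textsc{s}}$ are exactly the aligned blocks $B[\,j\cdot 2^{h'}+1 \,..\, (j+1)\cdot 2^{h'}\,]$. Any interval of length $\ge 2^{h'+1}$ in $\{1,\dots,m\}$ must fully contain at least one such aligned block (the set of multiples of $2^{h'}$ in any window of length $2^{h'+1}$ is non-empty and has room for a complete block on either side). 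The node $v$ at height $h'$ corresponding to that aligned block has all its pseudo-leaves in $B[l_i..r_i]$, hence all assigned to $e_i$.

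The only genuine subtlety is the alignment step in the consequence: one has to verify that doubling the size of the block (the factor of $2$ in ``$2x$ entries'') is exactly what guarantees that a full aligned subtree at height $\lfloor \log x \rfloor$ fits inside, regardless of where $l_i$ falls modulo $2^{h'}$. This is why the construction in Section~\ref{sec:static} allocates $2w'_i$ entries per element rather than $w'_i$; I would make this motivational point explicit. Everything else is a direct structural observation about a perfect binary tree and does not require any invariants beyond those already introduced.
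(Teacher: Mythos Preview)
The paper states this as a \emph{Fact} and gives no proof, so there is nothing to compare against; your argument is correct and is the natural one for a perfect binary tree.

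One minor remark: the word ``Consequentially'' in the statement hints at deriving the second part from the first rather than via an independent alignment computation. That reduction goes as follows: let $u$ be the lowest common ancestor of the $2x$ consecutive pseudo-leaves of $e_i$; then both children of $u$ contain some of these leaves, forming a suffix of the left child's leaf set and a prefix of the right child's. One child therefore contains at least $x$ of them as a suffix (resp.\ prefix), and applying the first part to that child with $r=x$ yields the desired node $v$ at height $\lfloor\log x\rfloor$. Your direct alignment argument is equally valid and arguably cleaner; either route is acceptable.
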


We define an almost-optimal tree $T$ as  a subtree of $\Ts$. Let $\eps_i$ denote an arbitrary node of height $\floor{\log(w'_i)}$ such that all leaves in the subtree rooted at $\eps_i$ are $i$-nodes. Since we assigned $2w'_i$ pseudoleaves  to $e_i$, 
such a node $\eps_i$ always exists. All pseudoleaves below  $\eps_i$ correspond to some array entries in $B[l_i..r_i]$. The tree $T$ is a subtree of $\Ts$ pruned at nodes $\eps_i$. That is, the  nodes $\eps_i$ are the leaves of $T$ and all proper ancestors  of all $\eps_i$ are internal nodes of $T$.  We keep keys in the internal nodes of $T$ that can be used for routing. 

The depth of the leaf $\eps_i$ does not exceed $\log\frac{W'}{w'_i}$ by more than a constant: every leaf of $\Ts$ has depth at most $\log W'+ 1$. The depth of $\eps_i$ is then at most 
$$ \log(W')+1    - (\log(w'_i)+1) =\log\frac{W'}{w'_i}+ 2  \le \log\frac{W}{w_i}+ 3.$$

Hence each $\eps_i$ has an almost-optimal depth in $T$.  In addition $\Ts$ is a perfectly balanced tree with $2n$ nodes and the depth of any node in $\Ts$  does not exceed $\log n+1$.   Summing up, the depth of any leaf $\eps_i$ that holds the element $e_i$ does not exceed $\min(\log(W/w_i),\log n)+3$. 

An example tree $\Ts$ and the corresponding almost-optimal tree $T$ are shown on Fig.~\ref{fig:static}. An interesting property of our method is that the tree $T$ is not necessarily a full tree: it is possible that some internal nodes have only one child. In the following sections we will show how the tree $\Ts$ can be dynamized.

\section{Almost-Optimal Dynamic Trees}
\label{sec:dynamic}
Our dynamic data structure maintains a balanced  tree $\Ts$ on a dynamic set  $B$ of pseudo-leaves.

This first version  of the algorithm will work in phases.  
 A phase will end 
when the total weight $W$ is increased by a factor of $2$ or when the total number of elements is increased by a factor of $2$. 

Unlike in the previous section, these pseudo leaves are not kept in an array.  Instead,   $\Ts$ is maintained as a $k$-neighbor tree data structure with $k=\log n$~\cite{MaurerOS76} as described in Section \ref{sec:prelim}. 
  This method guarantees that all leaves of $\Ts$ have the same depth and , since the total number of pseudoleaves can at most double within a phase,  the height of the tree is bounded by $\log(4W') +1\le \log n+ 4$.  An update of $\Ts$ takes $O(\log^2 n)$ time. 

Each phase starts with a correct 
$\Ts$ that had just been built from scratch using the approach of Section~\ref{sec:static}.
Set  $\bar \tau= \tau = \frac{W}{n}$.  This value stays constant within the phase.

During  a phase, for every element $e_i$ we keep track of its weight $w_i$ and its approximate weight 
$w'_i=\ceil{w_i/ \bar \tau}$. Note that this implies that during  a phase
$w'_i$ can be increased (incremented by $1$ at a step)  but not decreased. 

 When $w'_i$ is incremented by $1$,
 the tree $\Ts$ is updated:  we identify the rightmost pseudo-leaf $\ell_i$ associated to $e_i$ and insert two new pseudo-leaves, $\ell_n$ and $\ell_{n+1}$, immediately after $\ell_i$.  
  When a new element $e_f$ is inserted into a tree, we insert two new pseudo-leaves, $\ell_f$ and $\ell_{f+1}$, into $\Ts$. The leaf $\ell_f$ is inserted after the leaf $\ell_p$, where $e_p$ is the largest element satisfying $e_p<e_f$ and $\ell_p$ is the rightmost leaf associated to $e_p$. Every insertion of a pseudo-leaf results in a modification of the tree $\Ts$.  

We maintain the almost-optimal tree $T$ as a subset of $\Ts$ using the approach of Section~\ref{sec:static}. An internal node $\eps_i$ is an internal node of $\Ts$ of height $\lfloor \log(w'_i)\rfloor$ such that all leaves in its subtree are associated to an element $e_i$. Using the same calculations as in 
Section~\ref{sec:static}
the depth of $\eps_i$ is then at most 
$ \log\frac{W}{w'_i}+ 4$ (and not $3$ because the calculation is using $\bar \tau$ and not $\tau$.)

 After an update of $\Ts$, some nodes of $\Ts$ (and, hence, some nodes of $T$) can be moved. If all leaves of a moved internal node $u$ are associated to $e_j$, we also update the internal node $\eps_j$, if necessary. Suppose that a node $u$ was moved by one position to the left and the node $u'$ to the right of $u$ was also moved by one position to the left. If all leaf descendants of $u$ are associated with an element $e_i$ and all leaf descendants of $u'$ are associated with some $e_j\not=e_i$, then we may have to update $\eps_i$. If $\eps_i$ is an ancestor of $u$, we find the immediate left neighbor $\eps'_i$ of $\eps_i$. Since there are $2w'_i$ leaves associated to $e_i$ and the height of $\eps_i$ is $\log(w'_i)$, all leaf descendants of $\eps'_i$ are associated to $e_i$. Hence we can set $\eps_i:=\eps'_i$. We can find the $\eps_i$ and $\eps'_i$ for every moved node $u$ in $O(\log n)$ time. At most $O(\log n)$ nodes of $\Ts$ are moved during every update \cite{Andersson89}; hence,  the total update cost is $O(\log^2 n)$.

When the total weight $W$ is increased by a factor $2$ or when the total number of elements is increased by a factor $2$, we update the value of $\tau=\frac{W}{n}$, compute the new values $w'_i$  and as noted, re-build the tree from scratch.  The amortized cost of rebuilding $\Ts$ from scratch 
is  $O(1)$ per step since the balanced tree can be built in linear time.  When we re-build the tree $\Ts$, we use the new value of $k=\log n$.

\begin{lemma}
\label{lemma:res1}

 We can implement a binary search tree so that  access to an element and an insertion of a new element  are supported in $O(\log^2 n)$ amortized time. If an element $e_i$ was accessed $w_i$ times over a sequence of $W$ operations, then the depth of the leaf holding  $e_i$  does not exceed $\min(\log(W/w_i), \log n)+O(1)$.
\end{lemma}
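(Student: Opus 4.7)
The claim essentially packages together the construction already given in Section~\ref{sec:dynamic}, so the plan is to verify three things in order: (a) the depth bound, (b) the worst-case cost of updating $\Ts$ and the markers $\eps_i$ after a single access or insertion, and (c) the amortized cost of the periodic rebuilds.

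First I would bound the height of $\Ts$. Within a phase, $\bar\tau$ is fixed at the starting value of $W/n$, and both $W$ and $n$ can at most double before the phase ends. Since $w'_i = \ceil{w_i/\bar\tau}$ only grows during a phase, summing gives $W' \le W/\bar\tau + n = O(n)$, so $\Ts$ has $O(n)$ pseudo-leaves at all times. Because $\Ts$ is maintained as a $k$-neighbor tree with $k = \Theta(\log n)$, Andersson's bound yields height $\log n + O(1)$. The node $\eps_i$ sits at height $\floor{\log w'_i}$ by construction, so its depth in $\Ts$ is at most $\log n - \log w'_i + O(1)$. A calculation analogous to Lemma~\ref{lemma:approxim}, which also accounts for the fact that $W$ may have doubled within the phase while $\bar\tau$ is frozen, then gives the claimed bound $\min(\log(W/w_i),\log n) + O(1)$.

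Next I would account for the per-operation cost. Each access or insertion triggers at most two pseudo-leaf insertions into $\Ts$. A single insertion into a $k$-neighbor tree of height $h$ with parameter $k$ takes $O(h+k) = O(\log n)$ time and causes $O(\log n)$ nodes to shift position. For each shifted node $u$ I must check whether the label $\eps_j$ of some element should be moved accordingly; locating the candidate left-neighbor $\eps'_j$ requires walking up at most $O(\log n)$ ancestors of $u$, so the bookkeeping per shift is $O(\log n)$ and the total cost per update is $O(\log^2 n)$. Existence of the replacement $\eps'_i$ (whenever a move is in fact required) follows from Fact~\ref{fact:ancestor}, since $2w'_i$ consecutive pseudo-leaves of $e_i$ guarantee some ancestor of height $\floor{\log w'_i}$ whose subtree consists entirely of $e_i$-pseudo-leaves.

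Finally I would handle the rebuilds. A phase ends when either $W$ or $n$ doubles, at which point a new $\Ts$ is built from scratch in $O(n)$ time via the static construction of Section~\ref{sec:static}. Since at least $\Omega(n)$ operations have occurred since the previous rebuild (either $\Omega(n)$ insertions to double $n$, or $\Omega(W) \ge \Omega(n)$ accesses to double $W$, using $W \ge n$), the amortized rebuild cost is $O(1)$ per operation, which is absorbed into the $O(\log^2 n)$ bound. The main obstacle I expect is being careful that the depth bound holds throughout a phase rather than only at its beginning: $W$ and each $w'_i$ may both have grown, pulling in opposite directions on the quantity $\log(W/w'_i)$, but because everything changes by at most a constant factor within a phase the resulting slack is absorbed into the additive $O(1)$ hidden in the statement.
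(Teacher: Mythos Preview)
Your proposal is correct and follows essentially the same approach as the paper: Section~\ref{sec:dynamic} already contains the full construction, and the lemma is merely its summary, so the three-part verification you outline (height/depth bound via $W'=O(n)$ and Lemma~\ref{lemma:approxim}, $O(\log^2 n)$ update cost from $O(\log n)$ shifts each requiring $O(\log n)$ work to relocate the affected $\eps_j$, and $O(1)$ amortized rebuild cost) matches the paper's reasoning point for point.
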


\section{Faster Updates}
\label{sec:updates}
We can reduce the update time by grouping pseudo-leaves in the tree $\Ts$.  All pseudo-leaves are divided into $\Theta(n/\log^2n)$ groups so that each group contains at least $\log^2 n$ and at most $2\log^2 n$ pseudo-leaves. 

\begin{figure}
\centerline{\includegraphics[width=4.0in]{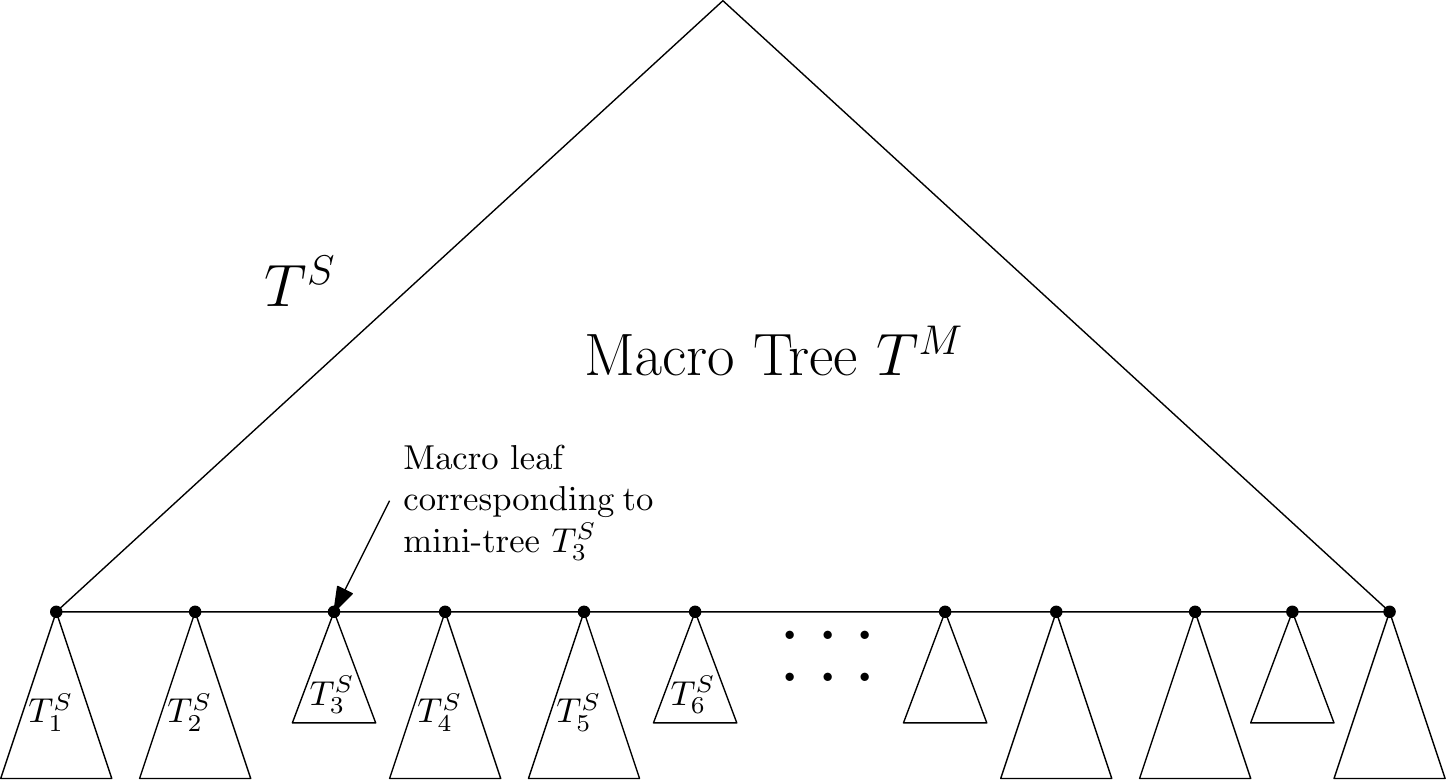}}
\caption{The partition of $T^S$ into macro tree $T^M$ and mini-trees $\Ts_j$.  The leaves of $T^m$ are the roots of the $\Ts_j$.  All the $\Ts_j$ have between $\log^2 n$ and $2\log^2 n$ pseudoleaves. 
 $T^M$ and all of the $\Ts_j$ are maintained as dynamic almost-optimal trees for their sets of leaves using the technique of Section \ref{sec:dynamic}.}
\label{fig:Grouping}
\end{figure}

The  tree $\Ts$ is divided into two components: a macro-tree  $T^M$with $O(n/\log^2 n)$ leaves and $O(n/\log^2 n)$ mini-trees $\Ts_j$.  See Fig.~\ref{fig:Grouping}. 
Mini-trees correspond to groups of pseudo-leaves: all pseudo-leaves in the group $G_j$ are stored in a mini-tree $\Ts_j$. 
The $j$'th leaf of  macro-tree $T^M$ is  the root of  mini-tree $\Ts_j$.  
As before, the almost-optimal tree $T$ is a subtree of $\Ts$. An element $e_i$ is assigned to a node $\eps_i$ of  $T$, such that the height of $\eps_i$ in $\Ts$ is $\log (w'_i)$ (up to an additive  constant error) and all leaves in the subtree of  $\eps_i$ are associated to $e_i$.  $T$ is the subtree of $\Ts$ induced by nodes $\eps_i$ and their ancestors.
The  division of a tree into macro-trees and mini-trees is a standard data structuring technique; see e.g.,\cite{AnderssonL91}.

We now find the  node $\eps_i$ for any element $e_i$ either in a mini-tree or in the macro-tree. Recall that there are $2w'_i$ pseudo-leaves associated to $e_i$.  Let $g = 2 \log^2 n.$

First suppose that $w'_i\le g$; then the pseudo-leaves of $e_i$ are distributed among 
$O(1)$  subtrees. If all pseudo-leaves are in one subtree $\Ts_j$, then $\Ts_j$ has at least one node $u$ of height $\floor{\log(w'_i)}$ such that all leaves below $u$ are associated to $e_i$.  If pseudo-leaves of $e_i$ are in two subtrees, $\Ts_j$ and $\Ts_{j+1}$, then either $w'_i$ rightmost pseudo-leaves in $\Ts_j$ are associated to $e_i$ or $w'_i$ leftmost leaves in $\Ts_{j+1}$ are associated to $e_i$.  Hence either $\Ts_j$ or $\Ts_{j+1}$ contains a node that can be chosen as $\eps_i$. If pseudo-leaves of $e_i$ are distributed among more than two mini-trees, then there is at least one mini-tree $\Ts_j$ with all pseudo-leaves associated to $e_i$. In the latter case we can choose the root of $\Ts_j$ as $\eps_i$.

Now  suppose that $kg\le w'_i< (k+1)g$ for some $k\ge 1$. Then there are at least $2k-1$ mini-trees with all pseudo-leaves associated to $e_i$. The roots of these mini-trees are macro-leaves $\ell_j$, $\ldots$, $\ell_{j+2k}$. There is at least one node $u$  of height $\floor{\log k}$ in the macro-tree, such that all macro-leaves below $u$ are  among $\ell_j$, $\ldots$, $\ell_{j+2k}$.   

Using Lemma~\ref{lemma:res1}, we maintain the mini-tree $\Ts_j$  for every group $G_j$. Since each mini-tree has $O(\log^2 n)$ leaves, updates on a mini-tree take $O((\log\log n)^2)$ time.  The macro-tree is updated only when a new mini-tree is inserted or a mini-tree is deleted. Hence the cost of updating the macro-tree can be distributed among $O(\log^ 2n)$ insertions of pseudo-leaves. Suppose that a new pseudo-leaf corresponding to an element $e_i$ is inserted. As in Section~\ref{sec:dynamic} we find the rightmost pseudo-leaf $\ell'_i$ corresponding to an element $e_i$.  The new pseudo-leaf $\ell_i$ is inserted into the same mini-tree as $\ell'_i$  immediately to the right of $\ell'_i$. Since every mini-tree has $O(\log^2 n)$ pseudo-leaves, we can insert a new pseudo-leaf in $O((\log\log n)^2 )$ time. If the number of pseudo-leaves in $\Ts_j$ is equal to  $2\log^2 n$, we split the mini-tree $\Ts_j$ into two mini-trees of size $\log^2 n$; then we insert a new macro-leaf into $\Tm$. The cost of an insertion into $\Tm$ is $O(\log^2 n)$. We can also split a mini-tree into two mini-trees in $O(\log^2 n)$ time. Hence the amortized cost of maintaining the macro-tree is $O(1)$. 

The total height of a tree does not exceed the height of the macro-tree plus the maximum height of a mini-tree. Since the number of mini-trees is bounded by $\frac{2W'}{(\log^2 n)/2}$, the height of the macro-tree does not exceed $\log(W')-2\log\log n+ 3$.  The height of a mini-tree is bounded by $2\log\log n+1+O(1)$ because it contains at most $2\log^2n$ pseudo-leaves. 
Hence the total height of our tree does not exceed $\log(W')+ 4$. We already showed that the  height of a sub-tree rooted at the node $\eps_i$ is $\floor{\log(w_i')}$; hence the depth of $\eps_i$ in $T$ is at most $\log(W'/w'_i)+ O(1)$. 
\begin{lemma}
  \label{lemma:res2}
  We can implement a binary search tree so that  access to an element and an insertion of a new element  are supported in $O((\log \log n)^2)$ amortized time. If an element $e_i$ was accessed $w_i$ times over a sequence of $W$ operations, then the depth of the leaf holding  $e_i$  does not exceed $\min(\log(W/w_i), \log n)+O(1)$.
\end{lemma}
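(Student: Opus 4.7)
The plan is to combine Lemma~\ref{lemma:res1} with a two-level decomposition. I partition the current set of pseudo-leaves into $\Theta(n/\log^2 n)$ consecutive groups, each of size between $\log^2 n$ and $2\log^2 n$, and store each group $G_j$ in a mini-tree $\Ts_j$ maintained by the data structure of Lemma~\ref{lemma:res1}. The roots of the mini-trees form the leaves of a macro-tree $\Tm$, itself maintained by the same data structure; the conceptual pseudo-leaf tree $\Ts$ is then the concatenation of $\Tm$ with the $\Ts_j$, and the almost-optimal tree $T$ is defined inside $\Ts$ exactly as in Section~\ref{sec:dynamic}.

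To bound the update cost I localize every pseudo-leaf insertion to one mini-tree: the two new pseudo-leaves go immediately to the right of the rightmost pseudo-leaf of $e_i$ (or of its predecessor) inside whichever mini-tree currently contains that leaf, which by Lemma~\ref{lemma:res1} costs $O((\log\log n)^2)$ since a mini-tree has $O(\log^2 n)$ leaves. The macro-tree is touched only when $\Ts_j$ overflows to $2\log^2 n$ pseudo-leaves and must be split into two mini-trees of size $\log^2 n$; this happens at most once per $\log^2 n$ insertions, and each such macro-update costs $O(\log^2 n)$ by Lemma~\ref{lemma:res1}, giving amortized $O(1)$ per insertion. The phase rebuild of $\Ts$ from scratch, triggered when $W$ or $n$ doubles, takes linear time and contributes $O(1)$ amortized per step, so the total per operation is $O((\log\log n)^2)$ amortized.

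For the depth bound I must exhibit, for each element $e_i$, a node $\eps_i$ of height $\lfloor\log w'_i\rfloor + O(1)$ in $\Ts$ all of whose descendant pseudo-leaves are associated to $e_i$. Set $g=2\log^2 n$. If $w'_i\le g$, the $2w'_i$ pseudo-leaves of $e_i$ are distributed among $O(1)$ consecutive mini-trees; when they lie in one or two of them, at least $w'_i$ consecutive such leaves sit inside a single mini-tree and Fact~\ref{fact:ancestor} applied there produces $\eps_i$, while when they span three or more mini-trees at least one interior mini-tree is entirely pure for $e_i$ and its root serves as $\eps_i$. If $w'_i>g$, at least $2\lfloor w'_i/g\rfloor-1$ consecutive mini-trees are pure for $e_i$, so Fact~\ref{fact:ancestor} applied in $\Tm$ yields a macro-node of height $\lfloor\log(w'_i/g)\rfloor$ whose subtree in $\Ts$ has height $\lfloor\log w'_i\rfloor + O(1)$ and contains only pseudo-leaves of $e_i$. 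Since $\Tm$ has height $\log W' - 2\log\log n + O(1)$ and each mini-tree has height $2\log\log n + O(1)$, the depth of $\eps_i$ in $\Ts$ is at most $\log(W'/w'_i)+O(1)$, and Lemma~\ref{lemma:approxim} converts this to $\min(\log(W/w_i),\log n)+O(1)$. The main delicacy is precisely this case analysis: because pseudo-leaves of a single element can straddle mini-tree boundaries, an $\eps_i$ of exactly the right height may fail to live cleanly inside one mini-tree, and the flexibility provided by the variable group size $[\log^2 n,2\log^2 n]$ is what guarantees we can always locate $\eps_i$ either in a mini-tree or in $\Tm$ at a loss of only an additive constant; tracking how $\eps_i$ slides when $k$-neighbor rebalancing shifts nodes of $\Ts$ is handled identically to Section~\ref{sec:dynamic} and fits within the same time bound.
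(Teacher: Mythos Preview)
Your proposal is correct and follows essentially the same approach as the paper: the same two-level decomposition into a macro-tree over $\Theta(n/\log^2 n)$ groups and mini-trees of size $[\log^2 n, 2\log^2 n]$ each handled by Lemma~\ref{lemma:res1}, the same amortization of the $O(\log^2 n)$ macro-tree update over $\log^2 n$ pseudo-leaf insertions, and the same case analysis on $w'_i$ versus $g=2\log^2 n$ to locate~$\eps_i$ either inside a mini-tree or in~$\Tm$. The only caveat is that Fact~\ref{fact:ancestor} is stated for a perfectly balanced $\Ts$, whereas here the mini-trees and macro-tree are $k$-neighbor trees; the paper glosses over this as well, and the analogous statement holds there with an additional $O(1)$ additive loss because all leaves are at the same depth and the height is $\log m + O(1)$.
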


The result of Lemma~\ref{lemma:res2} can be further improved by bootstrapping.  For any integer $f\ge 1$ the following statement can be proved.  
\begin{lemma}
\label{lemma:recur}
  Suppose there exists a binary search tree $T^f$, such that (1) the depth of a leaf holding an element $e_i$ in $T^f$ does not exceed $\min(\log(W/w_i),\log n) +O(1) + O(f)$ (2) the amortized cost of updating $T^f$ after an element access or an insertion is $O((\log^{(f)}n)^2)$.\\
Then there is a binary search tree $T^{f+1}$, such that  (1) the depth of a leaf holding an element $e_i$ in $T^{f+1}$ does not exceed $\min(\log(W/w_i),\log n) +O(1) + O(f+1)$ (2) the amortized cost of updating $T^{f+1}$ after an element access or an insertion is $O((\log^{(f+1)}n)^2)$.
\end{lemma}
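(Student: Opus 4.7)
The plan is to iterate the macro/mini decomposition of Section~\ref{sec:updates} one more time, now using the inductively given structure $T^f$ (rather than Lemma~\ref{lemma:res1}) to maintain each mini-tree. I set the mini-tree size to $g=\log^2 n$, partition the pseudo-leaves of $\Ts$ into groups of between $g$ and $2g$ pseudo-leaves held in mini-trees $\Ts_j$ each maintained by its own copy of $T^f$, and keep the macro-tree $\Tm$ (with $O(n/g)$ leaves) using Lemma~\ref{lemma:res1}. The tree $T^{f+1}$ is then the subtree of the combined $(\Tm,\Ts_j)$ structure induced by the nodes $\eps_i$ at height $\lfloor\log w'_i\rfloor$ and their ancestors, exactly as in Sections~\ref{sec:static} and~\ref{sec:updates}.

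For the update cost, each pseudo-leaf insertion into a mini-tree is an insertion into a $T^f$-managed dictionary of size at most $2g$. By hypothesis this costs $O((\log^{(f)} g)^2)$ amortized, and since $\log^{(f)}(\log^2 n)=\Theta(\log^{(f+1)} n)$ for all $f\geq 1$, this is $O((\log^{(f+1)} n)^2)$. When a mini-tree reaches size $2g$ it is split from scratch in $O(g)$ time and a new macro-leaf is inserted at cost $O(\log^2 n)$ by Lemma~\ref{lemma:res1}; both events are amortized over the $\Omega(g)=\Omega(\log^2 n)$ pseudo-leaf insertions between splits, contributing $O(1)$ per insertion. The total amortized update cost is therefore $O((\log^{(f+1)} n)^2)$.

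For the depth bound, both $\Tm$ and the mini-trees are balanced up to additive slack of $O(1)$ and $O(f)$ respectively, so the overall height is $\log(n/g)+\log g+O(f+1)=\log n+O(f+1)$. For a specific element $e_i$ I repeat the case analysis from Section~\ref{sec:updates}. If $w'_i\leq g$ then $\eps_i$ sits inside a single mini-tree (or straddles two, handled identically) at combined depth $\log(W'/g)+\log(g/w'_i)+O(f+1)=\log(W'/w'_i)+O(f+1)$. If $w'_i>g$ then $\Omega(w'_i/g)$ consecutive mini-trees are fully associated with $e_i$, so by Fact~\ref{fact:ancestor} we can choose $\eps_i$ inside $\Tm$ at depth $\log(W'/w'_i)+O(1)$. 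In either case Lemma~\ref{lemma:approxim} converts this to $\min(\log(W/w_i),\log n)+O(f+1)$, completing the induction.

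The main obstacle I anticipate is the bookkeeping that identifies and refreshes the $\eps_i$ pointers after each pseudo-leaf insertion; in Section~\ref{sec:dynamic} this contributed an $O(\log n)$ factor per $\Ts$-move. Inside a bootstrapped mini-tree the relevant tree height drops to $O(\log g+f)=O(\log\log n+f)$, so one wants to argue that this maintenance is naturally absorbed by the $T^f$-update itself (which internally performs analogous $\eps$-refreshes) together with the already-amortized macro-tree moves. Making this precise --- or, equivalently, folding the $\eps_i$ bookkeeping into a slightly strengthened inductive statement that $T^f$ also exposes the $\eps$ pointers it maintains --- is the cleanest way to close the recursive step.
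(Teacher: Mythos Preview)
Your proposal is correct and follows exactly the paper's approach: divide $\Ts$ into a macro-tree and mini-trees of size $\Theta(\log^2 n)$, implement each mini-tree with the inductively given $T^f$, amortize the macro-tree cost to $O(1)$, and combine the macro- and mini-depth bounds to get the $O(f+1)$ slack. Regarding your flagged obstacle, the paper glosses over the $\eps_i$ bookkeeping in precisely the way you suggest---the inductive hypothesis already packages $T^f$ as a search tree whose leaves \emph{are} the $\eps_i$, so no extra refresh is needed inside a mini-tree, and the macro-level $\eps_i$ maintenance is absorbed into the same amortized $O(1)$ macro-tree cost as in Lemma~\ref{lemma:res2}.
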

\begin{proof}
  We divide the tree $\Ts$ into the  macro-tree and mini-trees in the same way as in the proof of Lemma~\ref{lemma:res2}.  Every mini-tree is implemented using the tree $T^f$. Hence each mini-tree can be updated in $O((\log^{(f)} (\log n))^2)=O((\log^{(f+1)}n)^2)$ time. The amortized cost of maintaining the macro-tree is $O(1)$. Hence the total amortized cost of updates is $O((\log^{(f+1)}n)^2)$. 

  Suppose that $\eps_i$ is stored in the macro-tree. The depth of a node $\eps_i$ in the macro-tree is bounded by $\log(\min(W'/w'_i,n))+O(1)$. Now suppose that $\eps_i$ is stored in some mini-tree. The depth of $\eps_i$ in the mini-tree is bounded by $\log(\min(W'_g/w'_i,n_i))+O(f)+O(1)$, where $W'_g$ is the total sum of all quantized weights in the mini-tree and $n_g$ is the total number of elements in the subtree.  By the same argument as in Lemma~\ref{lemma:res2},  the depth of $\eps_i$ in $T$ is bounded by $\log(\min(W'/w'_i,n))+O(f+1)+O(1)$. 
\end{proof}

Our main result is obtained when we apply Lemma~\ref{lemma:recur} $f+1$ times for a parameter $f\ge 0$.
\begin{theorem}
  \label{theor:main}
  For any $f\ge 1$ there exists a binary search tree $T^{f}$, such that the depth of a leaf holding an element $e_i$ in $T^{f}$ does not exceed $\min(\log(W/w_i),\log n) +O(f)$ and the amortized cost of updating $T^{f}$ after an element access or an insertion is $O(\log^{(f)}n+f)$.
\end{theorem}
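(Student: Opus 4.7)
The plan is to prove the theorem by induction on $f$, with Lemma~\ref{lemma:res2} as the base case and the construction of Lemma~\ref{lemma:recur} driving the inductive step while tracking a strengthened update-cost bound through the induction. For $f=1$, Lemma~\ref{lemma:res2} provides a tree whose depth for $e_i$ is at most $\min(\log(W/w_i),\log n)+O(1)$ and whose amortized update cost is $O((\log\log n)^{2})$; since $(\log\log n)^{2}=O(\log n)$ for $n$ above an absolute constant, this matches the target $O(\log^{(1)}n+1)$.

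For the inductive step, I assume that $T^{f}$ satisfies the bounds claimed by the theorem, and apply the macro-tree/mini-tree decomposition of Lemma~\ref{lemma:recur}, using the inductive $T^{f}$ to implement each mini-tree of size $m=\Theta(\log^{2}n)$. The depth calculation from the proof of Lemma~\ref{lemma:recur} goes through verbatim, so the depth of the leaf holding $e_i$ in $T^{f+1}$ is at most $\min(\log(W/w_i),\log n)+O(f+1)$. The update-cost analysis is where the strengthening matters: the macro-tree contributes $O(1)$ amortized per operation, while each mini-tree update costs, by the inductive hypothesis applied to a data set of size $m=\Theta(\log^{2}n)$, at most $O(\log^{(f)}m+f)=O(\log^{(f+1)}n+f)$. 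Summing these yields $O(\log^{(f+1)}n+f+1)$, exactly the claimed bound for $T^{f+1}$.

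The main obstacle is avoiding the squared-logarithm bound $O((\log^{(f+1)}n)^{2})$ that a naive invocation of Lemma~\ref{lemma:recur} would carry forward. The fix is to feed the tighter $O(\log^{(f)}m+f)$ bound (rather than $O((\log^{(f)}m)^{2})$) from the inductive hypothesis into the mini-tree, so that $\log^{(f)}$ is applied only once, to the mini-tree size $m=\Theta(\log^{2}n)$, producing $\log^{(f+1)}n$ instead of $(\log^{(f+1)}n)^{2}$. The additive $+f$ term propagates transparently from the inductive call and also absorbs boundary effects when $\log^{(f)}n$ becomes constant or undefined: in that regime the mini-tree work is $O(1)$ and the dominant per-operation cost is the $O(f)$ navigation overhead through the $f$ nested levels of the decomposition.
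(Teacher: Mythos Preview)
Your proposal is correct and follows essentially the same approach as the paper: both obtain the theorem by iterating the macro-tree/mini-tree decomposition of Lemma~\ref{lemma:recur}, with Lemma~\ref{lemma:res2} (itself one round of that decomposition) serving as the starting point. The only cosmetic difference is bookkeeping: the paper's one-line proof applies Lemma~\ref{lemma:recur} one extra time and then uses $(\log^{(f+1)}n)^2=O(\log^{(f)}n)$ to absorb the square, whereas you carry the sharper bound $O(\log^{(f)}m+f)$ directly through the induction so the square never appears; both routes are equivalent and your version makes the origin of the additive $+f$ term more transparent.
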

We remark that when we insert a new element $e_i$, we need to update the search path for one leaf. This may incur an additional cost of $\log n + O(1)$ operations.  

Our data structure can also support two symmetric operations. We can decrement the weight of an element and delete an element of weight $1$.  These operations can be implemented in the same way as incrementing the weight of an element and an insertion of a new element.  

\section{Worst-Case Updates}
\label{sec:worst}
Our construction can be modified to support updates with worst-case time guarantees. We start by showing how the data structure from Section~\ref{sec:dynamic} can be changed. We run several processes in the background; these processes adapt the tree structure to the changing value of the parameter $\tau$ and maintain the correct number of pseudo-leaves for each element $e_i$. The value of $\tau$ is changed every time  the total weight $W$ for the  number of elements is changed by a constant factor (described below). Two background processes guarantee that the value of $\tau$ used in $\Ts$ is within a constant factor of its current  value.  Moreover pseudo-leaves are stored in a $k$-neighbor tree data structure, but the parameter $k=\Theta(\log n)$  must  be changed when the number of elements is increased or decreased by too much. We run another process that modifies the tree when  the parameter $k$ needs to be changed.

Let $W_0$ and  $n_0$ denote the total weight and the number of elements at some time $t_0$. Let  $\tau_0=W_0/n_0$ and let the \emph{delayed weight} of an element $e_i$ be defined as $\ow_i=\ceil{w_i/\tau_0}$. We maintain the invariant that $w'_i$ differs from $\ow_i$ by at most a constant factor.  In the worst-case construction delayed weights $\ow_i$ are used instead of $w'_i$, i.e., an element $e_i$ is assigned $\ow_i$ pseudo-leaves.  Our re-building processes guarantee that $W_0\le W\le (4/3)W_0$ and $n_0\le n\le (4/3)n_0$. Therefore $\tau=(W/n)\le (4/3)\tau_0$ and $\tau\ge (3/4)\tau_0$.  For any element $e_i$, $\ow_i=\frac{w_i}{\tau_0}\le (4/3)w'_i$ and $\ow_i\ge (3/4)w'_i$. Thus we have $\frac{\oW}{\ow_i}\le (16/9)\frac{W'}{w'_i}$  where $\oW=\sum_i \ow_i$ and $\log(\oW/\ow_i)< \log(W'/w'_i)+ 1\le \log(W/w_i)+2$. 

We move among three re-building processes. Each process is executed in the background during at most $n/18$ insertions or accesses.  The first process updates the value of $W_0$. If $W\ge (7/6)W_0$, we  set $W_1=W$, $n_1=n$, and compute $\tau_1=W_1/n_1$. For every $e_i$, we compute the new value of $\ow_i=w_i/\tau_1$ and update the tree $\Ts$ by removing some pseudo-leaves if necessary. When the number of pseudo-leaves for all elements is adjusted in this way, we set $W_0=W_1$ and $n_0=n_1$. 
The second process updates the value of $n_0$. If $n\ge (7/6)n_0$, we also compute the new $\tau_1=W_1/n_1$ for $W_1=W$ and $n_1=n$. Then for every element $e_i$ we set $\ow_i=w_i/\tau_1$ and update the tree $\Ts$. The tree always contains $O(n)$ leaves. Every time when we access an element or insert a new element, our background process inserts or removes $O(1)$ pseudo-leaves. We can choose the constant in such a way that adjusting the value of $\tau_0$ is distributed among  $n/18$ update or access operations. Suppose that $W\ge (7/6)W_0$ or $n\ge (7/6)n_0$; the value of $\tau_0$ will be adjusted after at most $n/6$ operations. Hence $W\le (4/3)W_0$ and $n\le (4/3)n_0$ at any time.
  
The third background process updates the parameter $k$ in the $k$-neighbor tree. We set $k_0=2(\log (W_0)+1)$ and maintain a $k_0$-neighbor tree on pseudo-leaves. When the number of leaves in $\Ts$ is increased by factor $2$, we start the process of adjusting $k$. Internal nodes on every level of the tree are divided into pieces, so that every piece consists of $k_0+2$ consecutive nodes. We process pieces on the same level in the left-to-right order. Since $\Ts$ is already a $k_0$-neighbor tree, the distance between any two $1$-nodes (a $1$-node is a node with one child) is at least $k_0+1$. Hence each piece contains at most two $1$-nodes. If there are two $1$-nodes in the same piece $P$, then they are the leftmost and the rightmost nodes in $P$. In this case, we execute  the procedure  $\consolid(u,u')$, where $u$ and $u'$ are the $1$-nodes in $P$. This procedure, that will be described below,  removes the node $u$ and adds one additional child to  $u'$.  If $P$ contains one $1$-node, then we examine the preceding piece $P'$. If $P'$ also contains a $1$-node and the distance between the $1$-nodes in $P$ and $P'$ is equal to $k_0+2$, we start the procedure $\consolid(u',u)$, where $u$ is the $1$-node in $P$ and $u'$ is the $1$-node in the slide that precedes $P$ . After all pieces on a tree level are processed, every piece contains at most one $1$-node and the distance between $1$-nodes is at least $k+2$. We will show below that $\consolid$ requires  $O(k\log  n)$ move operations and can be executed in $O(k\log^2 n)=O(\log^3 n)$ time. Hence the third background process needs $O((n/k)\log^3 n)=O(n\log^2 n)$ time. Since an update takes $O(\log^2 n)$ time, we can distribute the third process among $n/18$ tree updates or accesses. 

\begin{figure}[tb]
\resizebox{\textwidth}{!}{%
  \centering
  \begin{tikzpicture}[every node/.style={circle,draw},level distance=8mm, %
    level 1/.style={sibling distance=8mm,level distance=8mm},
    level 2/.style={sibling distance=8mm,level distance=8mm},
    level 3/.style={sibling distance=8mm,level distance=8mm},
    level 4/.style={sibling distance=8mm,level distance=8mm},
    ]
    at (-6,3) \node (v) {$v$}
    child{ node{}
      child{ node (ou) {}
        child{ node (u) {$u$}
          child{ node{}
          }
        }
      }
    }
    child{ node{}
    };

    \node (u1) [right=1cm of u] {$u_1$}
    child{ node{}}
    child{ node{}};
    \node (u2) [right=1cm of u1] {$u_2$}
    child{ node{}}
    child{ node{}};
    \node (u3) [right=1cm of u2] {$u_3$}
    child{ node{}}
    child{ node{}};
    \node (u5) [right=1cm of u3] {$u'$}
    child{ node{}};

    \node[right=9.5cm of v]{$v$}
    child{ node{}};
    
    \node[right=8.0cm of ou,draw=white]{{\Huge $\Rightarrow$}};

    \node (u6) [right=4cm of u5] {$u_1$}
    child{ node{}}
    child{ node{}};
    \node (u7) [right=1cm of u6] {$u_2$}
    child{ node{}}
    child{ node{}};
    \node (u8) [right=1cm of u7] {$u_3$}
    child{ node{}}
    child{ node{}};
    \node (u9) [right=1cm of u8] {$u'$}
    child{ node{}}
    child{ node{}};
  \end{tikzpicture}
}
  \caption{Example of procedure $\consolid(u,u')$. Left: nodes $u$ and $u'$ have one child. Right: node $u$ and its ancestors, up to a node $v$ that has two children, are removed. Children of $u_1$, $u_2$, $u_3$ are shifted one position to the right. Only relevant nodes and their children are shown.}
  \label{fig:consolid}
\end{figure}
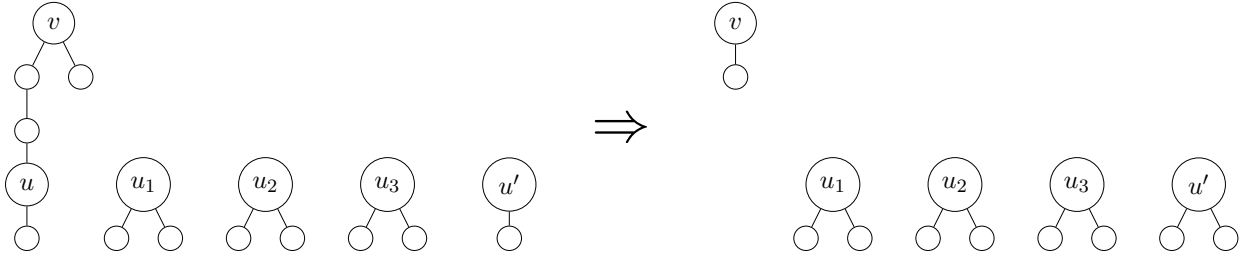

It remains to describe the procedure $\consolid(u,u')$. $\consolid(u,u')$ considers the children of nodes $u$, $u'$, and the children of all nodes between $u$ and $u'$. Every such node is moved by one position to the left. As a result, the node has no children and all other considered nodes have two children. Next, let $v$ be the lowest ancestor of $u$ that has two children. The node $u$ and all its ancestors that are below $v$ have no leaf descendants now. We remove the node $u$ and all nodes between $v$ and $u$. 
Now the node $v$ is a $1$-node. If $v$ is the root node, then we remove $v$. Otherwise, we check whether  $v$ has a neighbor $v'$, such that the distance between $v$ and $v'$ does not exceed $k_0+1$ and  $v'$ is a $1$-node. If $v'$ exists, we recursively call the procedure $\consolid(v,v')$ (respectively ($\consolid(v',v)$). There is at most one recursive call of our procedure  per tree level. Our procedure shifts $O(k_0)$ nodes by one position to the right and recursively calls itself on some higher tree level; every time when some node in $\Ts$ is shifted, we may have to move some leaf $\eps_i$ of $T$. Hence the total time of $\consolid(u,u')$ is $O(k_0\log^2 n)=O(\log^3 n)$. 
\begin{lemma}
  \label{lemma:worst1}
 We can implement a binary search tree so that  access to an element and an insertion of a new element  are supported in $O(\log^2 n)$ time. If an element $e_i$ was accessed $w_i$ times over a sequence of $W$ operations, then the depth of the leaf holding  $e_i$  does not exceed $\min(\log(W/w_i), \log n)+O(1)$.
\end{lemma}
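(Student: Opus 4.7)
The plan is to verify that the three background processes already described in this section, together with the immediate work of each access/insert, combine to give $O(\log^2 n)$ worst-case cost per operation, and that the invariants they maintain preserve the depth bound of Lemma~\ref{lemma:res1}.

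First I would handle the immediate cost per operation. An access traverses a root-to-leaf path of $T$, whose length is controlled by the depth bound. An insertion or weight increment triggers $O(1)$ pseudo-leaf insertions into the $k_0$-neighbor tree $\Ts$ with $k_0=\Theta(\log n)$; by the analysis of Section~\ref{sec:prelim} each such insertion touches $O(\log n + k_0)=O(\log n)$ nodes of $\Ts$, and, as in Section~\ref{sec:dynamic}, updating the associated $\eps_i$-pointers in $T$ for each moved node costs $O(\log n)$, for a total of $O(\log^2 n)$ worst-case per update.

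Next I would account for the shared work of the three background processes. Each epoch of process~1 (adjusting $W_0$), process~2 (adjusting $n_0$), and process~3 (adjusting $k$) performs $O(n\log^2 n)$ total work: the first two rebuild the pseudo-leaf multiplicities, each affecting $O(n)$ pseudo-leaves at cost $O(\log^2 n)$ apiece, and the third performs $O(n/k_0)$ calls to $\consolid$, each already shown in the text to cost $O(k_0\log^2 n)=O(\log^3 n)$ time. An epoch is triggered only after $W$ or $n$ grows by a factor $7/6$, or after the pseudo-leaf count doubles, so at least $n/18$ operations take place per epoch. Scheduling a constant amount of each background process on every operation therefore adds only $O(\log^2 n)$ worst-case time per operation while keeping all processes up to date. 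I would also verify that the three processes can be scheduled concurrently without interference: each one only inserts or removes $O(1)$ pseudo-leaves at a time, and each such micro-step is a legal update of the current $k_0$-neighbor tree.

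Finally I would establish the depth bound. The background processes preserve $W_0\le W\le (4/3)W_0$ and $n_0\le n\le (4/3)n_0$, so $\tau$ stays within a constant factor of $\tau_0$; hence $\ow_i$ and $w'_i$ differ by at most a constant factor, giving $\log(\oW/\ow_i)\le \log(W/w_i)+2$ as already computed in the text, and $\log \oW\le \log n+O(1)$. Running the argument of Lemma~\ref{lemma:res1} with $\ow_i$ in place of $w'_i$, the node $\eps_i$ sits at height $\lfloor\log\ow_i\rfloor$ in the perfectly balanced-depth tree $\Ts$ of height $\log\oW+O(1)$, so its depth in $T$ is $\log(\oW/\ow_i)+O(1)\le \min(\log(W/w_i),\log n)+O(1)$. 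The main obstacle I foresee is the bookkeeping required to certify that $\Ts$ remains a legal $k_0$-neighbor tree while $k_0$ is in transit, and that the piece-by-piece execution of $\consolid$ does not create a temporary situation violating the invariant that every $\eps_i$ has a node of the correct height with all-$i$ pseudo-leaves below it; this is handled by the left-to-right order of consolidation together with the guarantee that at most two $1$-nodes occur per piece.
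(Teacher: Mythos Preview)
Your proposal is correct and follows essentially the same argument as the paper: the text of Section~\ref{sec:worst} preceding the lemma \emph{is} the proof, and you have accurately summarized its three components (immediate per-operation cost, background-process cost spread over $\Theta(n)$ operations, and the depth bound via the delayed weights $\ow_i$). The one minor divergence is scheduling: you propose running the three background processes concurrently, while the paper cycles through them sequentially (``we move among three re-building processes''), each allotted $n/18$ operations so that a full cycle completes within $n/6$ operations; either scheme works, but the sequential one sidesteps the interference check you flag.
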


\subsection{Fast Updates}
Now we show how the data structure from Section~\ref{sec:updates} can be changed to support updates in worst-case time. As in Section~\ref{sec:updates} the tree $\Ts$ is divided into the macro-tree and mini-trees. Each mini-tree contains $O(\log^3 n)$ pseudo-leaves. 

A new pseudo-leaf is inserted into a mini-tree; the cost of an insertion is  $O((\log\log n)^2)$ time by Lemma~\ref{lemma:worst1}. We run an additional background process that maintains the sizes of mini-trees. During each iteration we identify the largest mini-tree $T_l$  among all subtrees of size at least $(7/4)\log^3 n$.  We split $T_l$  into two mini-trees of almost-equal size.  We also identify the smallest mini-tree $T_k$ of size at most $(3/4)\log^3 n$; we merge $T_k$ with one of its direct neighbors (i.e., with the mini-tree immediately to the left or immediately to the right of $T_k$). If the resulting mini-tree  is larger than , then we split it into two almost-equal parts.  We will show in Section~\ref{sec:minisplit} how a mini-tree  can be split into two almost-equal parts or merged with another mini-tree in less than $O(\log^2 n)$ time.  When we split or merge two mini-trees, we also have to perform $O(1)$ updates on the macro-tree. The cost of updates is $O(\log^2 n)$, hence each iteration takes $O(\log^2 n(\log\log n)^2)$ time. By Theorem 5 from~\cite{DietzS87}, we can organize  our background process so that each mini-tree has no more than $2\log^3 n$ and no less than $\log^3 n/2$ pseudo-leaves.

\begin{lemma}
  \label{lemma:res2worst}
  We can implement a binary search tree so that  access to an element and an insertion of a new element  are supported in $O((\log \log n)^2)$ amortized time. If an element $e_i$ was accessed $w_i$ times over a sequence of $W$ operations, then the depth of the leaf holding  $e_i$  does not exceed $\min(\log(W/w_i), \log n)+O(1)$.
\end{lemma}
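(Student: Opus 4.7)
The plan is to combine the worst-case basic data structure from Lemma~\ref{lemma:worst1} with the macro/mini-tree decomposition of Section~\ref{sec:updates}. I would implement every mini-tree using the tree of Lemma~\ref{lemma:worst1}, set so that each mini-tree holds $\Theta(\log^3 n)$ pseudo-leaves. Then a single update inside a mini-tree costs $O(\log^2(\log^3 n))=O((\log\log n)^2)$ in the worst case, which matches the target bound. The depth analysis is identical to Section~\ref{sec:updates}: because each mini-tree has $O(\log^3 n)$ pseudo-leaves its height is $3\log\log n+O(1)$, and because the number of mini-trees is $O(W'/\log^3 n)$ the macro-tree height is $\log(W')-3\log\log n+O(1)$, so the total depth of $\eps_i$ in $T$ is still $\log(W'/w'_i)+O(1)$, giving $\min(\log(W/w_i),\log n)+O(1)$ by Lemma~\ref{lemma:approxim}.

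The next step is to handle splits and merges of mini-trees off-line, so that they do not charge $O(\log^2 n)$ to any single user operation. Following the text, I would run a background process that in each iteration locates the largest mini-tree of size at least $(7/4)\log^3 n$ and splits it, and separately locates the smallest mini-tree of size at most $(3/4)\log^3 n$ and merges it with a neighbor (re-splitting if the merged tree is too large). The split/merge primitives are supplied by the construction of Section~\ref{sec:minisplit} in $O(\log^2 n)$ time, and each split/merge induces $O(1)$ updates on the macro-tree, which itself is maintained by the worst-case construction underlying Lemma~\ref{lemma:worst1} at $O(\log^2(\text{number of mini-trees}))=O(\log^2 n)$ per macro-update. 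So one background iteration costs $O(\log^2 n \cdot (\log\log n)^2)$.

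The key technical step is to schedule this background work so that at all times every mini-tree has between $\log^3 n/2$ and $2\log^3 n$ pseudo-leaves. This is exactly the setting of the gap-maintenance theorem of Dietz and Sleator (Theorem~5 of~\cite{DietzS87}): the invariants on the sizes of the $\Theta(n/\log^3 n)$ mini-trees can be preserved if one does $O(\log^{O(1)} n)$ worth of background work per user operation, while no single user operation is charged more than $O((\log\log n)^2)$ work. In parallel, inside each mini-tree we continue to run the three $\tau$-, $n$-, and $k$-adjustment processes of Section~\ref{sec:worst} on its local parameters, so that Lemma~\ref{lemma:worst1} applies with worst-case guarantees; those processes are already distributed over the accesses and insertions that touch the mini-tree.

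The main obstacle is coordinating these scheduled maintenance tasks: we must (i) ensure that a single user access/insertion performs only $O((\log\log n)^2)$ work on the current mini-tree, (ii) pay for at most a constant number of background "quanta" per user operation, and (iii) show that these quanta suffice to keep both the mini-tree size invariant and the internal re-building invariants of Section~\ref{sec:worst} satisfied before any mini-tree overflows or underflows. Once this scheduling is verified, the stated worst-case cost of $O((\log\log n)^2)$ per operation and the depth bound $\min(\log(W/w_i),\log n)+O(1)$ both follow, completing the proof.
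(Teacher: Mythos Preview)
Your proposal is essentially the paper's own argument: you use the macro/mini-tree decomposition with mini-trees of size $\Theta(\log^3 n)$, implement each mini-tree via Lemma~\ref{lemma:worst1}, maintain mini-tree sizes by a background split/merge process scheduled via Theorem~5 of~\cite{DietzS87}, and appeal to Section~\ref{sec:minisplit} for the split/merge primitives; the depth bound follows exactly as in Section~\ref{sec:updates}. This matches the paper's proof in structure and in all parameters, and your extra remarks (running the three rebuilding processes of Section~\ref{sec:worst} locally in each mini-tree, the explicit height arithmetic) are consistent with it.
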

We can recursively apply Lemma~\ref{lemma:res2worst} in the same way as described in Section~\ref{sec:updates}.  To obtain the main result of this paper with worst-case guarantees, we apply Lemma~\ref{lemma:res2worst} $k+1$ times for a parameter $k>1$. 
\begin{theorem}
  \label{theor:mainworst}
  For any $k\ge 1$ there exists a binary search tree $T^{k}$, such that the depth of a leaf holding an element $e_i$ in $T^{k}$ does not exceed $\min(\log(W/w_i),\log n) +O(k)$ and the cost of updating $T^{k}$ after an element access or an insertion is $O(\log^{(k)}n+k)$.
\end{theorem}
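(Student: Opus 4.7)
The plan is to prove the theorem by induction on $k$, mirroring the bootstrapping argument of Lemma~\ref{lemma:recur} but substituting the worst-case building block Lemma~\ref{lemma:res2worst} for the amortized Lemma~\ref{lemma:res2} at every level. The base case $k=1$ is exactly Lemma~\ref{lemma:res2worst}: a binary search tree with depth $\min(\log(W/w_i),\log n)+O(1)$ and worst-case update cost $O((\log\log n)^2) = O(\log^{(1)}n + 1)$, where ``worst-case'' follows from the background-process framework described in Section~\ref{sec:worst}. For the inductive step, assume a structure $T^{k-1}$ exists meeting the stated bounds with worst-case update cost $O(\log^{(k-1)}n + k-1)$, and construct $T^{k}$ from it.

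The inductive construction follows Section~\ref{sec:updates} verbatim. I would maintain the conceptual pseudo-leaf tree $\Ts$ as a macro-tree $\Tm$ whose leaves are the roots of $\Theta(n/N)$ mini-trees, each holding between $N/2$ and $2N$ pseudo-leaves, for $N$ chosen so that a $T^{k-1}$ operation on a structure of size $\Theta(N)$ costs $O(\log^{(k)}n + k)$; a suitable polylogarithmic function of $n$ suffices. Each mini-tree is implemented as an instance of $T^{k-1}$, and the macro-tree is implemented with the worst-case structure of Lemma~\ref{lemma:worst1}. The placement of $\eps_i$ either inside a mini-tree or in the macro-tree, together with the depth analysis carried out in the proof of Lemma~\ref{lemma:recur}, applies unchanged and yields a leaf depth bound of $\min(\log(W/w_i),\log n)+O(k)$, since each recursion level adds only a constant.

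The main obstacle, and the only point where the argument genuinely departs from the amortized version, is converting mini-tree splits and merges into worst-case operations. In Lemma~\ref{lemma:recur} one charges the $O(N)$ cost of a split to the $\Theta(N)$ cheap insertions that trigger it; for a worst-case bound this charging must be realized incrementally. I would import exactly the deamortization used in the proof of Lemma~\ref{lemma:res2worst}: a persistent background process that, at each user operation, performs $O(1)$ incremental steps of an ongoing split of the currently largest oversized mini-tree and of an ongoing merge of the currently smallest undersized one, invoking Theorem~5 of~\cite{DietzS87} to guarantee all mini-tree sizes remain in $[N/2,2N]$ throughout. The processes that keep $\bar\tau$ within a constant factor of $W/n$ and that adjust the $k$-neighbor parameter are inherited directly from Section~\ref{sec:worst} and similarly contribute only $O(1)$ worst-case work per operation.

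Summing up, the per-operation worst-case cost is the $O(\log^{(k)}n+k)$ mini-tree update, plus the $O(1)$ per-operation share of background split/merge and parameter-adjustment work, plus the $O(\log n)$ search-path update charged on insertion (already noted after Theorem~\ref{theor:main}), which together give the claimed $O(\log^{(k)}n+k)$ bound and complete the induction. The recursion is invoked $k$ times starting from the base, equivalent to ``applying Lemma~\ref{lemma:res2worst} $k+1$ times'' as stated in the paragraph preceding the theorem.
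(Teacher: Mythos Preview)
Your proposal is correct and follows essentially the same approach as the paper: the paper's proof is a one-sentence pointer (``recursively apply Lemma~\ref{lemma:res2worst} in the same way as described in Section~\ref{sec:updates}''), and you have faithfully unpacked that pointer into an explicit induction, correctly importing the deamortized split/merge background process and the Dietz--Sleator size-maintenance argument from Section~\ref{sec:worst}.

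One small arithmetic slip to fix: in your final paragraph you add the ``$O(\log n)$ search-path update charged on insertion'' into the sum and still conclude $O(\log^{(k)}n+k)$. For $k\ge 2$ that does not hold. The resolution is that the theorem bounds only the \emph{update} cost \emph{after} an access or insertion, not the search itself; the paper's remark following Theorem~\ref{theor:main} explicitly separates that $\log n + O(1)$ search cost from the restructuring bound. Simply drop that term from your summation and the induction closes cleanly.
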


\bibliographystyle{plainurl}
\bibliography{trees,2018_19_RGC}
\nocite{KarpinskiN09}

\section{Appendix}
\subsection{Alphabetic Codes}
\label{sec:alphabetic}
Our solution of the dynamic alphabetic tree problem can be also used to maintain a dynamic alphabetic code with almost-optimal length.  There is a one-to-one correspondence between binary prefix-free codes and binary trees. Given a tree $T$, every root-to-node path in $T$ can be encoded by a codeword: we write a $0$-bit for every edge from a node to its left child and a $1$-bit for every edge from a node to its right child. Using this correspondence we can translate our alphabetic tree method into an adaptive alphabetic encoding.
However this method would produce an alphabetic encoding in $O(\log n)$ time per symbol in the worst-case or in $O(1)$ time on average:  to generate a codeword we would need to traverse a path from the root to a leaf.  
In this section we show that an even better result is possible: we can  encode a sequence of symbols in $O(1)$ time per symbol. To achieve this goal,  some changes in the core method of maintaining the tree $\Ts$ are necessary. In our new method, described in Lemma~\ref{lemma:code1} we move only the leaves of $\Ts$.  This modification allows us to store explicitly the codewords all symbols. We will use the following notation in this section:  $w_i(t)$ will denote the number of times the symbol $a_i$ occurs in the length-$t$ prefix of a sequence $S$ and we will denote by  $S[t]$  the $t$-th symbol in $S$. 
\begin{lemma}
\label{lemma:code1}
  There is  a binary alphabetic code such that  the codeword length of the  symbol $S[t+1]=a_i$ does not exceed $\min(\log(t/w_i(t)),\log n) +O(1)$.  The cost of updating the code after encoding a symbol or inserting a new symbol into $\cC$ is  $O(\log^3 n)$. 
\end{lemma}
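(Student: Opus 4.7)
The plan is to adapt the worst-case dynamic tree construction of Section~\ref{sec:worst} so that the internal structure of $\Ts$ is \emph{frozen} and only pseudo-leaves move during rebalancing. Once only leaves move, the internal node $\eps_i$ representing each symbol $a_i$ changes rarely, so the bit string labelling the root-to-$\eps_i$ path, i.e.\ the codeword of $a_i$, can be stored explicitly and updated only when needed. Concretely, I would realize $\Ts$ implicitly as the trie of bit-string labels assigned to the pseudo-leaves by a list-maintenance data structure~\cite{Willard92,BenderCDFZ02,BenderFGKM17}, which produces labels of length $O(\log n)$, supports insertions in polylogarithmic worst-case time, and relabels only polylogarithmically many elements per operation.

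The key steps are as follows. First, each pseudo-leaf $\ell$ carries a label $\lambda(\ell)$ of length $\lceil\log m\rceil$, with $m=2W'$, interpreted as the root-to-$\ell$ path in $\Ts$. Second, for every symbol $a_i$ I would enforce the invariant that its $2w'_i$ pseudo-leaves are consecutive in list order and \emph{block-aligned}, so that they are exactly the leaves below some internal node $\eps_i$ of height $\lfloor\log w'_i\rfloor$. Third, I would store the codeword of $a_i$ explicitly as the common label prefix of length $\lceil\log m\rceil-\lfloor\log w'_i\rfloor$; by Lemma~\ref{lemma:approxim} this length is $\min(\log(t/w_i(t)),\log n)+O(1)$.

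Encoding a symbol or inserting a new one translates, exactly as in Section~\ref{sec:dynamic}, to inserting two pseudo-leaves into $\Ts$; the list-maintenance structure handles the insertion and accompanying relabelings in polylogarithmic worst-case time. For each relabeled pseudo-leaf I would check whether the block-alignment invariant is violated for some symbol $a_j$ and, if so, either reassign $\eps_j$ to an adjacent subtree and rewrite its stored codeword in $O(\log n)$ time, or trigger a local $\consolid$-style reconsolidation of $a_j$'s block using additional leaf moves. The background machinery of Section~\ref{sec:worst} can be layered on to spread out the rebuilds needed when $\tau$ or the list-maintenance capacity change by a constant factor. Summing the list-maintenance insertion cost, the number of codewords touched per operation, and the $O(\log n)$ cost of rewriting each codeword then yields the claimed $O(\log^3 n)$ worst-case update bound.

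The main obstacle is maintaining the block-alignment invariant for every symbol simultaneously: the list-maintenance scheme is oblivious to block boundaries and may split a symbol's pseudo-leaf block across a $2^{\lfloor\log w'_i\rfloor}$ boundary during a relabeling. Reconsolidating such a block without cascading work to many other symbols is the technical core. I expect this to follow from packed-memory-array density arguments analogous to those underlying $\consolid$ in Section~\ref{sec:worst}, suitably lifted so that bit-string labels are kept short and only $O(\log n)$ codewords need to be touched per operation.
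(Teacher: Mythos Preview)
Your overall strategy—freeze the internal nodes of $\Ts$, move only pseudo-leaves via a list-maintenance/packed-memory structure, and store each codeword explicitly so it can be rewritten in $O(\log n)$ time when the corresponding $\eps_i$ moves—is exactly the paper's approach. The accounting $O(\log^2 n)$ moved leaves times $O(\log n)$ per codeword rewrite $=O(\log^3 n)$ is also what the paper does.

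Where you diverge is the block-alignment invariant, which you correctly identify as your ``main obstacle.'' The paper simply does not impose it. Instead of insisting that $a_i$'s $2w'_i$ pseudo-leaves coincide with a full subtree, the paper stores the pseudo-leaves in an array of size $4W'$ in which roughly half the slots are \emph{dummy} pseudo-leaves, unassigned to any symbol. The file-maintenance structure~\cite{Willard92,BenderFGKM17} moves at most $O(\log^2 n)$ array entries per insertion while keeping relative order. The interval $B[l_i..r_i]$ spanned by $a_i$'s leftmost and rightmost pseudo-leaves then has length at least $2w'_i$ (it contains the $2w'_i$ real pseudo-leaves of $a_i$ plus possibly some dummies, and nothing else). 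Now Fact~\ref{fact:ancestor} applies verbatim: any interval of length $\ge 2w'_i$ fully contains the leaf set of some node of height $\lfloor\log w'_i\rfloor$, and every leaf below that node is either an $a_i$-leaf or a dummy—never a leaf of another symbol. That node is $\eps_i$. So the existence of $\eps_i$ is automatic after every update, with no reconsolidation, no $\consolid$-style cascades, and no alignment invariant to maintain.

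In short, your proposal is correct in spirit but you have manufactured a difficulty the paper avoids with one line: pad the array with dummies and rely on Fact~\ref{fact:ancestor}. Drop the alignment invariant and the argument becomes as short as the paper's.
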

\begin{proof}
  We maintain a tree $\Ts$ of pseudo-leaves, as in the case of alphabetic trees. However, we use a different method to maintain the tree:   we only move the leaf nodes in the tree; internal nodes are never moved. There are $4W'$ pseudo-leaves in our modified tree and all pseudo-leaves are stored in an array $B$. Some pseudo-leaves are associated to symbols; we associate $2w'_i(t)$ pseudo-leaves to the symbol $a_i$, where $w'_i(t)=w_i(t)/\tau$ and $\tau=t/n$. The non-associated pseudo-leaves are called dummy pseudo-leaves. For any $i$, pseudo-leaves for the symbol $a_i$ precede the pseudo-leaves for the symbol $a_{i+1}$.  When a new pseudo-leaf is inserted, positions of some other pseudo-leaves in $B$ may change; that is, some pseudo-leaves are shifted to the right (or to the left) in $B$, but their relative order remains unchanged.   We can maintain elements in $B$  in such a way that at most $\log^2n$ pseudo-leaves are moved to different positions in $B$ after every insertion~\cite{Willard92, BenderFGKM17}.  For every $a_i$, there is a node $\eps_i$  such that all pseudo-leaves below $\eps_i$ are associated to $a_i$. The height of $\eps_i$ in $\Ts$ is $\log(w'_i(t)$.  The depth of $\eps_i$ in $\Ts$ is at most $\min(\log(t/w_i(t)),\log n) +O(1)$.   When a new element is inserted or when $w'_i(t)$ is changed, we need to insert a new pseudo-leaf and move $O(\log^2 n)$ other pseudo-leaves. After every move, we may have to update some node $\eps_i$; in this case we also re-compute the codeword for the symbol $a_i$.  For every moved pseudo-leaf, we identify its ancestor $\eps_i$;  if necessary, we change the position of $\eps_i$ in $\Ts$ and  update the codeword for $a_i$ in $O(\log n)$ time.  Since we move $O(\log^2 n)$ leaves, we might change codewords for $O(\log^2 n)$ symbols.  The total time needed to insert a new pseudo-leaf is $O(\log^3 n)$. 
\end{proof}
Thus our dynamic search trees produce a dynamic  alphabetic code that can be updated in $O(1)$ time and encodes the sequence of $W$ symbols in  $W(H+O(1))$ bits, where $H$ denotes the  entropy of $W$.  We can reduce the time to maintain the code with the same technique as in the case of binary trees.

\begin{lemma}
\label{lemma:code2}
There is a binary alphabetic code $\cC^{2}$, such that  (1) for $t=0$, $\ldots$, $W-1$ the codeword length of a symbol  $S[t+1]=a_i$ in $\cC^{2}$ does not exceed $\min(\log(t/w_i(t)),\log n) +O(1)$ (2) the  cost of updating $\cC^{2}$ after encoding a symbol or inserting a new symbol into $\cC^{2}$  is $O((\log \log n)^3)$.
\end{lemma}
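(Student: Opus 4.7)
The plan is to bootstrap the alphabetic code $\cC^1$ of Lemma~\ref{lemma:code1} using the macro/mini-tree decomposition of Section~\ref{sec:updates}, and to store every symbol's codeword in a two-piece form so that an update only has to rewrite the short affected piece. First I would partition the pseudo-leaves of $\Ts$ into consecutive groups of $\Theta(\log^3 n)$ pseudo-leaves, store each group in its own mini-code obtained from Lemma~\ref{lemma:code1}, and form a macro-tree $\Tm$ of $\Theta(n/\log^3 n)$ leaves (one per mini-code) which is itself maintained by the construction of Lemma~\ref{lemma:code1}. For each symbol $a_i$ the witness node $\eps_i$ is picked exactly as in Section~\ref{sec:updates}: it lies in $\Tm$ whenever the approximate weight $w'_i$ is large enough for $a_i$ to own several entire mini-codes, and otherwise lies inside one of at most two adjacent mini-codes; its codeword is then the root-to-$\eps_i$ path in $\Ts$, splitting naturally into a macro-prefix and a mini-suffix.

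The central data-structural trick will be to represent this codeword as a pair $(m_i, s_i)$, where $m_i$ is a pointer to the macro-leaf that roots the mini-code containing $\eps_i$ (or to $\eps_i$ itself when $\eps_i$ already lies in $\Tm$), and $s_i$ is the $O(\log\log n)$-bit suffix from $m_i$ down to $\eps_i$. Each macro-leaf stores a single $O(\log n)$-bit macro-prefix that is shared by every symbol attached to it. Emitting a codeword becomes an $O(1)$-word concatenation in the word RAM model; more importantly, an update inside a mini-code touches only the $s_i$'s of symbols in that mini-code, and an update to $\Tm$ only rewrites the prefix fields at macro-leaves, not the per-symbol records.

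For the update analysis, encoding a symbol inserts one pseudo-leaf into a mini-code of size $O(\log^3 n)$; Lemma~\ref{lemma:code1} applied inside that mini-code runs in $O((\log\log n)^3)$ time and rewrites $O((\log\log n)^2)$ suffixes $s_i$, each of $O(\log\log n)$ bits. Whenever a mini-code leaves the interval $[\tfrac{1}{2}\log^3 n,\, 2\log^3 n]$ we split or merge it as in Section~\ref{sec:updates} and invoke Lemma~\ref{lemma:code1} on $\Tm$; each such event costs $O(\log^3 n)$ but is separated by $\Theta(\log^3 n)$ insertions, giving amortised macro-tree cost $O(1)$ per insertion (including the $O(\log^3 n)$ time needed to rebuild the two fresh mini-codes). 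A macro-tree update may reshuffle $O(\log^2 n)$ macro-leaves, but each only requires rewriting its own macro-prefix field; no symbol record needs to be touched except for the $O(\log^3 n)$ symbols whose pseudo-leaves physically migrated to a new mini-code, which is again $O(1)$ amortised per insertion.

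The depth bound follows by the two-case argument of Lemma~\ref{lemma:res2}: the macro-tree has height $\log(W'/\log^3 n) + O(1)$, each mini-code has height $O(\log\log n)$, and the height of $\eps_i$ within $\Ts$ is $\lfloor \log w'_i\rfloor$, so $\eps_i$ sits at depth $\log(\min(W'/w'_i, n)) + O(1)$ in $\Ts$. Combined with Lemma~\ref{lemma:approxim} and the bound $W' \le 2t$ from Section~\ref{sec:approxim}, the codeword for $S[t+1]=a_i$ has length at most $\min(\log(t/w_i(t)), \log n) + O(1)$. The hardest point I anticipate is the potential cascade of codeword rewrites after a split or merge of a mini-code, which naively would cost $\Theta(\log^3 n)$ per affected symbol; the two-piece representation is designed precisely to dissolve this obstacle, because every symbol routed through a changed macro-leaf inherits the new macro-prefix through a shared pointer rather than through its own bitstring.
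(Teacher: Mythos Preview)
Your proposal is correct and follows essentially the same route as the paper: decompose $\Ts$ into a macro-tree and mini-trees, maintain each piece with the Lemma~\ref{lemma:code1} machinery, and store each codeword as the concatenation of a shared macro-prefix (the path to the mini-tree root) and a short mini-suffix. The paper chooses mini-trees of size $\Theta(\log^4 n)$ rather than your $\Theta(\log^3 n)$, but your parameter also balances: the $O(\log^3 n)$ cost of a macro-tree update and a mini-tree rebuild is amortised over $\Theta(\log^3 n)$ insertions, and the per-insertion mini-tree cost is $O((\log\log n)^3)$ either way. Your explicit pointer representation $(m_i,s_i)$ is exactly the mechanism behind the paper's one-line ``the codeword of $a_i$ is obtained by concatenating the codeword of $T_g$ and the mini-codeword of $\eps_i$''.
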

\begin{proof}
  We divide the tree $\Ts$ into the macro-tree and mini-trees. Every mini-tree has $\Theta(\log^4 n)$ pseudo-leaves. We keep codewords for all nodes $\eps_i$  in the macro-tree. We also keep codewords for the roots of all mini-trees.  Finally 
we also store  mini-codewords for nodes $\eps_j$ stored in mini-trees. The mini-codeword of a node $\eps_j$ in a mini-tree $\Ts_g$ encodes the path from the root of $\Ts_g$ to $\eps_j$. 
We maintain mini-codewords using Lemma~\ref{lemma:code1}. When the number of pseudo-leaves in a mini-tree equals $2\log^4 n$, we split the mini-tree into two mini-trees of equal size. Then we insert a new leaf into the macro-tree and update it in $O(\log^3 n)$ time.    The total time needed to update the macro-tree, codewords of macro-leaves, and codewords of nodes in the macro-tree is $O(\log^3 n)$. Hence the total amortized cost of inserting a pseudo-leaf is $O((\log\log n)^3)$. 

For every symbol $a_j$  we record the node $\eps_j$ that holds $a_j$; if $\eps_j$ is stored in a mini-tree $T_g$, then the node $\eps_j$ also contains a pointer to $T_g$. In order to produce the codeword for the next symbol $S[t+1]=a_i$, we look up the node $\eps_i$ holding $a_i$. If $\eps_i$ is in the macro-tree, we output its codeword. If $\eps_i$ is in a mini-tree $T_g$, the codeword of $a_i$ is obtained by concatenating the codeword of $T_g$ and the mini-codeword of $\eps_i$.  
\end{proof}

\begin{theorem}
\label{theor:code}
There is a binary alphabetic code $\cC$, such that  (1) for $t=0$, $\ldots$, $W-1$ the codeword length of a symbol  $S[t+1]=a_i$ in $\cC$ does not exceed $\min(\log(t/w_i(t)),\log n) +O(1)$ (2) the  cost of encoding the next symbol, updating $\cC$ after encoding a symbol or inserting a new symbol into $\cC$  is $O(1)$.
\end{theorem}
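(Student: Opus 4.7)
The plan is to bootstrap Lemma \ref{lemma:code2} recursively, in direct analogy with how Theorem \ref{theor:main} is derived by iterating Lemma \ref{lemma:recur} starting from Lemma \ref{lemma:res2}. I build a sequence of codes $\cC^{(1)}, \cC^{(2)}, \ldots$ with $\cC^{(1)}$ being the code $\cC^{2}$ of Lemma \ref{lemma:code2}. For $k\ge 1$, the code $\cC^{(k+1)}$ decomposes $\Ts$ into a macro-tree and mini-trees, where each mini-tree has $\Theta((\log^{(k+1)}n)^4)$ pseudo-leaves and is maintained internally using $\cC^{(k)}$. Each level of recursion converts an $O((\log^{(k+1)}n)^3)$ amortized update cost inside a mini-tree into an $O((\log^{(k+2)}n)^3)$ cost, since updates now act on substantially smaller mini-trees; the macro-tree is maintained with $O(1)$ amortized cost per pseudo-leaf insertion using the same split/merge scheme as in Lemma \ref{lemma:code2}.

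For every symbol $a_i$, the code explicitly stores two bit strings: the codeword of the root of the mini-tree containing $\eps_i$ (in the macro-tree), and the mini-codeword of $\eps_i$ within its mini-tree (provided by $\cC^{(k)}$ at the appropriate level). Emitting the codeword for $S[t+1]=a_i$ is then a concatenation of two already-stored strings, which can be done in $O(1)$ time. Whenever an update moves some $\eps_i$, its stored codeword is recomputed on the spot; because only leaves of $\Ts$ are moved in Lemma \ref{lemma:code1} -- a property that must be carried through the recursion -- each pseudo-leaf insertion at any level touches only polylogarithmically many codewords in its own mini-tree, which is absorbed into the amortized update budget at that level.

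For the depth bound the analysis is telescoping: the macro-tree contributes depth $\log(W'/|\Ts_g|)+O(1)$ to the containing mini-root, and the mini-tree $\Ts_g$ contributes $\log(|\Ts_g|/w'_i)+O(1)$, summing to $\log(W'/w'_i)+O(1)$; applying Lemma \ref{lemma:approxim} yields $\min(\log(t/w_i(t)),\log n)+O(1)$, as required by condition (1). Choosing the recursion depth $f$ so that $\log^{(f+1)}n$ is a constant, the amortized update cost collapses to $O(1)$, which together with the $O(1)$-time codeword emission from the stored strings gives condition (2).

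The main obstacle is bookkeeping the explicit codewords across recursion levels within the $O(1)$ amortized budget: one pseudo-leaf insertion at the top may cascade through all $f$ levels, and at each level it may invalidate the codewords of many symbols whose $\eps_i$, mini-root, or ancestors are shifted. Making the total work telescope correctly requires carefully preserving Lemma \ref{lemma:code1}'s move-only-leaves property at each recursion level, and tightly coordinating it with the macro-tree split/merge scheme so that each level charges only $O(1)$ amortized work per pseudo-leaf event and no additive constant in the depth bound is lost as the recursion is unfolded.
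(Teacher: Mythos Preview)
Your bootstrap idea is natural, but it does not reach $O(1)$; the paper takes a different route at the bottom of the hierarchy.

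The gap is in the sentence ``choosing the recursion depth $f$ so that $\log^{(f+1)}n$ is a constant, the amortized update cost collapses to $O(1)$.'' To make $\log^{(f+1)}n=O(1)$ you must take $f=\Theta(\log^*n)$, and then $f$ is no longer a constant. Exactly as in Lemma~\ref{lemma:recur} and Theorem~\ref{theor:main}, each level of the macro/mini decomposition contributes an additive $O(1)$ to the depth of $\eps_i$ (the heights telescope, but each splitting of the tree into a macro-tree and mini-trees costs one or two extra levels), so after $f$ iterations the codeword-length bound is $\min(\log(t/w_i(t)),\log n)+O(f)=\min(\log(t/w_i(t)),\log n)+O(\log^*n)$, violating condition~(1). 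Likewise, a codeword is the concatenation of one piece per level, so emitting it takes $\Theta(f)=\Theta(\log^*n)$ time; your claim that only ``two bit strings'' are stored hides the fact that the second string is itself recursively assembled from $f$ pieces. If instead you store the full codeword explicitly, then a single pseudo-leaf move at the bottom level can invalidate stored codewords whose update cost must be charged through all $f$ levels, and you are back to $\Omega(\log^*n)$ per operation.

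The paper stops the recursion after a \emph{constant} number of levels (three: macro-tree, macro-mini tree, micro-tree) and handles the bottom level by a different mechanism entirely: each micro-tree has only $d=(\log\log n)^4$ pseudo-leaves, so there are at most $d^{d}=o(n)$ distinct micro-trees, and a universal look-up table of size $d^{d+1}=o(n)$ precomputes the effect of every possible pseudo-leaf insertion and the position of every $\eps_j$ in every micro-tree. This tabulation gives genuine $O(1)$ time at the bottom, and because only three levels are used, the additive depth overhead and the number of codeword components are both $O(1)$. The look-up table is the missing idea in your argument.
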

\begin{proof}
We apply the method of Lemma~\ref{lemma:code2} to mini-trees. Every mini-tree is divided into a macro-mini tree and micro-trees. Leaves of a macro-mini tree are roots of micro-trees; each micro-tree contains between $d/4$ and $d$ pseudo-leaves for $d=(\log\log n)^4$.  The macro-mini tree (with codewords corresponding to tree leaves and codewords for nodes in the macro-mini tree) can be  updated in $O(\log\log n)^3$ time. We insert a new leaf into the macro-mini tree when a micro-tree is split in two parts, i.e., when $d$ new pseudo-leaves are inserted into a micro-tree. Thus the total cost of maintaining the macro-tree and macro-mini trees is $O(1)$. When the first $t$ symbols are encoded, the depth of the  node $\eps_i$ that holds a symbol $a_i$ is bounded by $\min(\log(t/w_i(t)),\log n) +O(1)$. 

Finally  we observe that we can simulate a micro-tree with $d$ pseudo-leaves using a look-up table.  Pseudo-leaves in such a small tree are associated to at most $d$ symbols. The approximate weight of every symbol is bounded by $d$. Hence the total number of different trees is bounded by $d^{d}$. We construct a look-up table that stores all possible trees on up to $d$ pseudo-leaves. For every micro-tree we store the index of  the  micro-tree obtained after inserting a new pseudo-leaf associated to the $j$-th smallest element (or the index of a tree obtained by inserting the pseudo-leaf for some new element). For every micro-tree we also keep positions of all nodes $\eps_j$ in the micro-tree, where $\eps_j$ is the node that holds the $j$-th smallest symbol.  Thus we need to store a universal table with $d^{d+1}=o(n)$ entries. This table can be initialized in $o(n)$ time. Hence the total cost of inserting a pseudo-leaf is $O(1)$ and we can maintain the code in $O(1)$ time per symbol. 

A codeword for any symbol $a_i$ can be generated in $O(1)$ time. If the node $\eps_i$ that holds $a_i$ is  in a micro-tree, then the codeword for $a_i$ consists of three parts: encoding of a leaf in the macro-tree, encoding of a leaf in the macro-mini tree, and encoding of the node $\eps_i$ in the micro-tree.  If $\eps_i$ is in a macro-mini tree, then the codeword for $a_i$ consists of two parts: encoding of a leaf in the macro-tree and  encoding of the node $\eps_i$ in  the macro-mini tree. If $\eps_i$ is in the macro-tree, then the codeword for $a_i$ is the codeword for $\eps_i$.  Every codeword consists of up to three components and  all components of a codeword are stored in our data structure. Hence, we can generate the codeword for any symbol in $O(1)$ time. 
\end{proof}

\subsection{ Proof of Lemma \ref{lem:Dynamic Entropy}}
\label{subsec:Lemma 1 proof}
\begin{proof}
We need to show that $\sum_{t=1}^W  \log \frac t {\max\left(w^{(t-1)}_{a_t},\, 1\right)}  \le W\cdot H + 2W$.  In this sum $t$ assumes all integer values between $1$ and $W$ and  $w_{a_i}^{(\cdot)}$ assumes all integer values between $1$ and $w_i$.  Therefore we have 

\begin{multline}
  \sum_{t=1}^W  \log \frac t {\max\left(w^{(t-1)}_{a_t},\, 1\right)}=
\sum _{t=1}^W \log t -\sum_{j=1}^n\sum_{i=1}^{w_j}\log i\le \\
\le W\log W -\sum_{j=1}^nw_j(\log w_j -2)=  W\log W -\sum_{j=1}^nw_j\log w_j+ 2\sum_{j=1}^nw_j=H\cdot W + 2W
\end{multline}

The inequality in the second line follows from the fact that $\sum_{j=1}^x\log j\le x\log x$ and $\sum_{j=1}^x \log j\ge x\log x -2x$.  The former inequality is obvious. For completeness, we prove the latter.
$\sum_{j=1}^x \log j=\sum_{i=0}^l L_i$ where 
$$L_i=(\log(x/2^i)+\log((x/2^i)-1)+\ldots +\log((x/2^{i+1})+1)$$ and 
$l=\log x$. Every term in $L_i$ is larger than $\log(x/2^{i+1})$; there are $x/2^{i+1}$ terms in $L_i$. Hence $L_i\ge (x/2^{i+1})(\log x- (i+1))$ and 
$$\sum_{i=0}^l L_i \ge \left(\sum_{i=0}^l\frac{1}{2^{i+1}}\right)x\log x - \left(\sum_{i=0}^l \frac{i+1}{2^{i+1}}\right)x\ge 
x\log x - 2x$$
because $\sum_{i=0}^l \frac{i+1}{2^{i+1}}\le 2$. 
\end{proof}

\subsection{Lemma~\ref{lemma:res2worst}: Splitting and Merging Mini-Trees}
\label{sec:minisplit}
It remains to describe how to split a mini-tree $T_m$ in less than $O(\log^2 n)$ time. We consider  all nodes in  $T_m$ of height $h=2\log\log n$. Each subtree with the root in a height-$h$ node has at most $2\log^2 n$ pseudo-leaves. By definition, the height of the tree is bounded by $3\log\log n+O(1)$. Hence there are $f=O(\log n)$ nodes of height $h$. We insert the $\floor{f/2}$ leftmost nodes into a new $k$-neighbor tree $T'_1$. In other words, we regard the nodes of height $h$ as leaves and insert them one-by-one into a new tree $T'_1$. The tree $T_1$ is obtained from $T'_1$: suppose that  a height-$h$ node $u$ in $T_m$  corresponds to a leaf $u_l$ in $T'_1$; we append the subtree rooted at $u$ in $T_m$ to $u_l$. Every leaf is inserted into $T'_1$ in $O((\log\log n)^2)$ time and we can append a sub-tree  to the corresponding  leaf in $O(1)$ time. Hence the mini-tree $T_1$ is constructed in $O(\log n(\log \log n)^2)$ time. $T_1$ is a $k$-neighbor tree because it satisfies conditions (1)-(3) from Section~\ref{sec:prelim}: All leaves of $T'_1$ have the same depth and all sub-trees appended to leaves of $T'_1$ have the same height. Therefore all leaves of $T_1$ have the same depth. All nodes in $T'_1$ satisfy conditions (2) and (3) because $T'_1$ is a $k$-neighbor tree.  We append subtrees of $T_m$ to leaves of $T_1'$ in the same order as they are stored in $T_m$ and $T_m$ is the $k$-neighbor tree. Therefore  all nodes on the $h$ lowest levels of $T_1$ also satisfy conditions (2) and (3).   The tree $T_2$ is obtained from $\ceil{f/2}$ rightmost height-$h$ nodes of $T_m$ in the same way. We can merge two mini-trees using a symmetric procedure. 

\end{document}